\newcommand{\Exp}{\mathop{\mathbb{E}}}
\newcommand{\cD}{\mathcal{D}}
\newcommand{\cE}{\mathcal{E}}
\newcommand{\cF}{\mathcal{F}}
\newcommand{\cY}{\mathcal{Y}}
\newcommand{\cN}{\mathcal{N}}
\newcommand{\N}{\mathbb{N}}
\newcommand{\BZ}{\mathbb{Z}}
\newcommand{\unif}{\mathsf{Unif}}
\newcommand{\mcsp}{\textsf{Max-CSP}}
\newcommand{\mocsp}{\textsf{Max-OCSP}}
\newcommand{\mfas}{\textsf{MFAS}}
\newcommand{\ord}{\textsf{ord}}
\newcommand{\sym}{\mathfrak{S}}
\renewcommand{\unif}{\mathsf{Unif}}
\newcommand{\mas}{\textsf{MAS}}
\newcommand{\mug}{\textsf{Max-UniqueGames}}
\newcommand{\mbtwn}{\textsf{Max-Btwn}}
\newcommand{\btwn}{\textsf{Btwn}}
\newcommand{\cspval}{\textsf{csp-val}}
\newcommand{\ocspval}{\textsf{ocsp-val}}
\newcommand{\Alg}{\mathsf{Alg}}
\newcommand{\Prot}{\mathsf{Prot}}
\newcommand{\Player}{\mathsf{Player}}
\newcommand{\supp}{\textsf{supp}}
\newcommand{\yes}{\textbf{YES}}
\newcommand{\no}{\textbf{NO}}
\newcommand{\IRMD}{\mathsf{IRMD}}
\newcommand{\veca}{\mathbf{a}}
\newcommand{\vecb}{\mathbf{b}}
\newcommand{\vecj}{\mathbf{j}}
\newcommand{\vecv}{\mathbf{v}}
\newcommand{\vecy}{\mathbf{y}}
\newcommand{\vecz}{\mathbf{z}}
\newcommand{\vecsigma}{{\boldsymbol{\sigma}}}
\newcommand{\vecpi}{\boldsymbol{\pi}}
\renewcommand{\tilde}{\widetilde}
\newcommand{\bigbar}{\;\middle|\;}
\newcommand{\matchings}{\mathcal{M}}
\newcommand\eqdef{\stackrel{\text{\small def}}{=}}
\newcommand{\mF}{\mocsp(\cF)}
\newcommand{\coar}{\downarrow}
\newcommand{\qcoar}{\coar q}
\newcommand{\refn}{\uparrow}
\newcommand{\DY}{\cY^{\Pi,\vecpi}_{q,\alpha,T}(n)}
\newcommand{\DN}{\cN^{\Pi}_{q,\alpha,T}(n)}
\newcommand{\Piq}{\Pi^{\qcoar}}
\newcommand{\Psiq}{\Psi^{\qcoar}}
\newcommand{\mPi}{\mocsp(\Pi)}
\newcommand{\mPiq}{\mcsp(\Piq)}
\newcommand{\NP}{\mathbf{NP}}
\declaretheoremstyle[bodyfont=\it,qed=\qedsymbol]{noproofstyle}
\numberwithin{equation}{section}
\declaretheorem[name=Observation,numbered=no]{observation*}
\declaretheorem[numberlike=equation]{fact}
\declaretheorem[numberlike=equation]{theorem}
\declaretheorem[name=Theorem,numbered=no]{theorem*}
\declaretheorem[numberlike=equation]{lemma}
\declaretheorem[name=Lemma,numbered=no]{lemma*}
\declaretheorem[name=Corollary,numbered=no]{corollary*}
\declaretheorem[name=Proposition,numbered=no]{proposition*}
\declaretheorem[name=Claim,numbered=no]{claim*}
\declaretheorem[name=Conjecture,numbered=no]{conjecture*}
\declaretheorem[name=Question,numbered=no]{question*}
\declaretheoremstyle[bodyfont=\it]{defstyle} 
\declaretheorem[numberlike=equation,style=defstyle]{definition}
\declaretheorem[unnumbered,name=Definition,style=defstyle]{definition*}
\declaretheorem[unnumbered,name=Example,style=defstyle]{example*}
\declaretheorem[unnumbered,name=Notation=defstyle]{notation*}
\declaretheorem[unnumbered,name=Construction,style=defstyle]{construction*}
\declaretheoremstyle[]{rmkstyle} 
\newtheorem*{remark}{Remark}
\title{Streaming approximation resistance of every ordering CSP}
\author{Noah G. Singer\thanks{Department of Computer Science, Carnegie Mellon University, Pittsburgh, PA, USA. Email: \texttt{ngsinger@cs.cmu.edu}.} 
\and Madhu Sudan\thanks{School of Engineering and Applied Sciences, Harvard University, Cambridge, Massachusetts, USA. Email: \texttt{madhu@cs.harvard.edu}.}
\and Santhoshini Velusamy\thanks{Toyota Technological Institute, Chicago, Illinois, USA. Email: \texttt{santhoshini@ttic.edu}.}}
\begin{document}

\maketitle


\begin{abstract}
    An ordering constraint satisfaction problem (OCSP) is defined by a family $\cF$ of predicates mapping permutations on $\{1,\ldots,k\}$ to $\{0,1\}$. An instance of $\mF$ on $n$ variables consists of a list of constraints, each consisting of a predicate from $\cF$ applied on $k$ distinct variables. The goal is to find an ordering of the $n$ variables that maximizes the number of constraints for which the induced ordering on the $k$ variables satisfies the predicate. OCSPs capture well-studied problems including `maximum acyclic subgraph' ($\mas$) and ``maximum betweenness''. 

    In this work, we consider the task of approximating the maximum number of satisfiable constraints in the (single-pass) streaming setting, when an instance is presented as a stream of constraints. We show that for every $\cF$, $\mF$ is approximation-resistant to $o(n)$-space streaming algorithms, i.e., algorithms using $o(n)$ space cannot distinguish streams where almost every constraint is satisfiable from streams where no ordering beats the random ordering by a noticeable amount. This space bound is tight up to polylogarithmic factors. In the case of $\mas$ our result shows that for every $\epsilon>0$, $\mas$ is not $(1/2+\epsilon)$-approximable in $o(n)$ space. The previous best inapproximability result, due to \textcite{GT19}, only ruled out $3/4$-approximations in $o(\sqrt n)$ space.
    
    Our results build on recent works of \textcite{CGSV24,CGS+22-linear-space} who provide a tight, linear-space inapproximability theorem for a broad class of ``standard'' (i.e., non-ordering) constraint satisfaction problems (CSPs) over arbitrary (finite) alphabets.
    Our results are obtained by building a family of appropriate standard CSPs (one for every alphabet size $q$) from any given OCSP and applying their theorem to this family of CSPs. To convert the resulting hardness results for standard CSPs back to our OCSP, we show that the hard instances from this earlier theorem have the following ``partition expansion'' property with high probability: For every partition of the $n$ variables into small blocks, for most of the constraints, all variables are in distinct blocks.
\end{abstract}

\section{Introduction}

In this work, we consider the complexity of ``approximating'' ``ordering constraint satisfaction problems (OCSPs)'' in the ``streaming model''. We introduce these notions below before describing our results.

\subsection{Orderings and constraint satisfaction problems}\label{ssec:intro-notation}

In this work, we consider optimization problems where the solution space is all possible orderings of $n$ variables. The Travelling Salesperson Problem and most forms of scheduling problems fit this description, though our work considers a more concrete class of problems, namely \emph{ordering constraint satisfaction problems (OCSPs)}. OCSPs as a class were first defined by \textcite{GHM+11}. To describe them here, we first set up some notation and terminology, and then give some examples.

We let $[n]$ denote the set $\{1,\ldots,n\}$ and $\sym_n$ denote the space of all permutations in $[n]^n$, i.e., \[ \sym_n \eqdef \{ \vecsigma = (\sigma_1,\ldots,\sigma_n ) \in [n]^n : \forall i \neq j, \sigma_i \neq \sigma_j \}. \] We interpret each element $\vecsigma \in \sym_n$ as a schedule for $n$ tasks, labeled $1, \ldots, n$, such that task $i$ is scheduled in position $\sigma_i$. We use bold type to denote vectors (e.g., $\vecsigma$), parenthetical indices for sequences of vectors (e.g., $\vecsigma(1),\ldots,\vecsigma(m)$), and normal type to denote scalar entries (e.g., $\sigma_i$).

Given $k$ distinct integers $a_1,\ldots,a_k$, we define $\ord(a_1,\ldots,a_k) \in \sym_k$ as the unique $\vecpi = (\pi_1,\ldots,\pi_k) \in\sym_k$ such that $\pi_i < \pi_j$ iff $a_i < a_j$ for all $i \neq j$. If $a_1,\ldots,a_k$ are not all distinct, we write $\ord(a_1,\ldots,a_k) = \bot$, and thus we can view $\ord$ as a map $\BZ^k \to \sym_k \cup \{\bot\}$. Given $\vecsigma = (\sigma_1,\ldots,\sigma_n) \in \sym_n$ and $k$ indices $\vecj = (j_1,\ldots,j_k) \in [n]^n$, we let $\vecsigma|_\vecj$ denote $(\sigma_{j_1},\ldots,\sigma_{j_k})\in[n]^n$.

The solution space of OCSPs is precisely $\sym_n$. A {\em $k$-ary ordering constraint predicate} is a function $\Pi:\sym_k \to \{0,1\}$. An {\em ordering constraint application} $(\Pi,\vecj)$ on $n$ variables is given by a predicate $\Pi$ and a $k$-tuple $\vecj \in [n]^n$ of distinct indices, and $(\Pi,\vecj)$ is {\em satisfied} by an assignment $\vecsigma\in \sym_n$ iff $\Pi(\ord(\vecsigma|_{\vecj}))=1$. In the interest of brevity, we will often skip the term ``ordering'' below and further refer to constraint predicates as ``predicates'' and constraint applications as ``constraints''.

A \emph{maximum ordering constraint satisfaction problem}, denoted $\mF$, is specified by a (finite) family of ordering constraint predicates  $\cF \subseteq \bigcup_{k \in \N} \{ \Pi : \sym_k \rightarrow \{0,1\}\}$. An {\em instance} of $\mF$ on $n$ variables is given by $m$ constraints $C_1,\ldots,C_m$ where $C_i = (\Pi_i,\vecj(i))$ and $\Pi_i \in \cF$. (We will typically specialize to the case where the family $\cF$ contains only a single predicate $\Pi$; in this case, we write the problem as $\mPi$ and omit $\Pi$ from constraint descriptions. We'll see below that for proving inapproximability results, it's sufficient to consider this case; see the Remark in \ref{sec:approximability} and the proof of \ref{thm:main} in \ref{sec:lower-bound}.) The \emph{value} of an ordering $\vecsigma \in \sym_n$ on the instance $\Psi$, denoted $\ocspval_\Psi(\vecsigma)$, is the fraction of constraints satisfied by $\vecsigma$, i.e., \[ \ocspval_\Psi(\vecsigma) \eqdef \tfrac{1}{m}\sum_{i\in[m]} \Pi_i(\ord(\vecsigma|_{\vecj(i)})). \] The optimal value of $\Psi$ is defined as \[ \ocspval_\Psi \eqdef \max_{\vecsigma \in \sym_n}\{\ocspval_\Psi(\vecsigma)\}. \]

The canonical problem that fits the $\mocsp$ framework is the \emph{maximum acyclic subgraph} ($\mas$) problem. In this problem, the input is a directed graph on $n$ vertices, and the goal is to find an ordering of the vertices that maximizes the number of forward edges. A simple depth-first search algorithm can decide whether a given graph $G$ has a \emph{perfect} ordering (i.e., one which has \emph{no} backward edges); however, \textcite{Kar72}, in his famous list of 21 $\NP$-complete problems, proved the $\NP$-completeness of deciding whether, given a graph $G$ and a parameter $k$, there exists an ordering of the vertices such that at least $k$ edges are forward. For our purposes, $\mas$ can be viewed as a 2-ary OCSP $\mas=\mocsp(\Pi_{\mas})$, where $\Pi_{\mas}: \sym_2 \rightarrow \{0,1\}$ denotes the predicate given by $\Pi_{\mas}(1,2) = 1$ and $\Pi_{\mas}(2,1) = 0$, and we associate vertices with variables and edges with constraints. Indeed, a constraint $(j_1,j_2)$ (where $j_1,j_2 \in [n]$ are distinct variables) will be satisfied by an ordering $\vecsigma =(\sigma_1,\ldots,\sigma_n)\in \sym_n$ iff $\Pi_{\mas}(\ord(\vecsigma|_{(j_1,j_2)})) = 1$, or equivalently, iff $\sigma_{j_1} < \sigma_{j_2}$. In the ``scheduling'' interpretation of OCSPs, a constraint $(j_1,j_2)$ expresses ``precedence'' of event $j_1$ over event $j_2$, since it is satisfied iff $j_1$ is scheduled before $j_2$.

A second natural $\mocsp$ problem is the \emph{maximum betweenness} ($\mbtwn$) problem. This is a 3-ary OCSP in which an ordering $\vecsigma = (\sigma_1,\ldots,\sigma_n)$ satisfies a constraint $(j_1,j_2,j_3)$ iff $\sigma_{j_2}$ is between $\sigma_{j_1}$ and $\sigma_{j_3}$, i.e., iff $\sigma_{j_1} < \sigma_{j_2} < \sigma_{j_3}$ \emph{or} $\sigma_{j_1} > \sigma_{j_2} > \sigma_{j_3}$, and the goal is again to find the maximum number of satisfiable constraints. Thus, $\mbtwn=\mocsp(\Pi_\btwn)$ where we define the constraint predicate $\Pi_{\btwn}:\sym_3 \rightarrow \{0,1\}$ by $\Pi_\btwn(1,2,3) = 1, \Pi_\btwn(3,2,1) = 1$, and $\Pi_\btwn(\vecpi) = 0$ for all other $\vecpi \in \sym_3$. The complexity $\mbtwn$ was originally studied by \textcite{Opa79}, who proved that even deciding whether a set of betweenness constraints is perfectly satisfiable (i.e., whether the value of an instance $\Psi$ is $1$) is $\NP$-complete.

\subsection{Approximability}\label{sec:approximability}

In this work, we consider the \emph{approximability} of ordering constraint satisfaction problems. We say that a (randomized) algorithm $\Alg$ is an {\em $\alpha$-approximation algorithm} for $\mF$ if for every instance $\Psi$, $\alpha \cdot \ocspval_\Psi \leq \Alg(\Psi) \leq \ocspval_\Psi$ with probability at least 2/3 over the internal coin tosses of $\Alg$. Thus our approximation factors $\alpha$ are numbers in the interval $[0,1]$. 

Given an ordering predicate $\Pi:\sym_k \to \{0,1\}$, let $\rho(\Pi) = \frac{|\{\vecpi \in \sym_k: \Pi(\vecpi) = 1\}|}{k!}$ denote the probability that $\Pi$ is satisfied by a random ordering. Given a (finite) family of predicates $\cF$, let $\rho(\cF) = \min_{\Pi \in \cF} \{\rho(\Pi)\}$. Every instance $\Psi$ of $\mF$ satisfies \[ \ocspval_\Psi \geq \rho(\cF) \] (since the right-hand side is a lower bound on the expected value of a random assignment). Thus, the trivial algorithm that always outputs $\rho(\cF)$ is a $\rho(\cF)$-approximation algorithm for $\mF$. Under what conditions it is possible to beat this ``trivial'' approximation is a major open question.

\begin{remark}
We define $\rho(\cF) = \min_{\Pi \in \cF} \{\rho(\Pi)\}$ to be the ``trivial'' approximability threshold for $\mF$ because for every $\epsilon > 0$ there are instances of $\mF$ with value at most $\rho(\cF) + \epsilon$. This is a consequence, for instance, of \ref{lemma:n-upper-bound} below, which holds \emph{a priori} for the single-predicate case $|\cF| = 1$, but can be extended to general finite families $\cF$ by taking the minimum over $\Pi \in \cF$ of $\rho(\Pi)$, since every instance of $\mPi$ is also an instance of $\mF$ with the same value.
\end{remark}

A problem is said to be {\em approximation resistant} with respect to a given class of algorithms if the trivial algorithm is essentially the best. Specifically for $\mocsp(\cF)$, we say it is approximation resistant for a class of algorithms if for every $\epsilon > 0$, no algorithm in the class $(\rho(\cF)+\epsilon)$-approximates $\mocsp(\cF)$. Ordering CSP problems were shown to be approximation resistant with respect to the class of polynomial time algorithms by \textcite{GHM+11} (assuming the unique games conjecture (UGC) of \textcite{Kho02}). In this work, we consider the analogous question with respect to ``sublinear-space streaming algorithms'', which we define next.

\subsection{Streaming algorithms}

A (single-pass) streaming algorithm for OCSPs is defined as follows. In $\mF$, an instance $\Psi$ is presented as a stream $(C_1,\ldots,C_m)$, where each stream element is a constraint $C_i =(\Pi_i,\vecj(i))$. A streaming algorithm $\Alg$ updates its state with each element of the stream and at the end produces an output $\Alg(\Psi) \in [0,1]$ (which is supposed to estimate $\ocspval_\Psi$). The measure of complexity of interest to us is the space used by $\Alg$ measured as a function of $n$, the number of variables in $\Psi$. (This is a somewhat non-standard choice in general but standard in the CSP literature. This choice is consistent with that of measuring the complexity of graph algorithms as a function of the number of vertices in the input instance.)
In particular, we distinguish between algorithms that use space polylogarithmic in $n$ and space that grows polynomially ($\Omega(n^\delta)$ for $\delta > 0$).
(Note that for this coarse level of distinction, measuring space as a function of $n$ or as a function of the input length would be qualitatively equivalent. However, our main result is more detailed and tight up to polylogarithmic factors when viewed as a function of $n$.)

We say that a problem $\mF$ is {\em (streaming) approximable} if we can beat the trivial $\rho(\cF)$-approximation algorithm by a positive constant factor. Specifically, $\mF$ is said to be approximable if for every $\delta > 0$ there exists $\epsilon > 0$ and a space $O(n^{\delta})$ algorithm that $(\rho(\cF)+\epsilon)$-approximates $\mF$. We say $\mPi$ is {\em (streaming) approximation-resistant} otherwise.

In recent years, investigations into CSP approximability in the streaming model have been strikingly successful, resulting in tight characterizations of streaming approximability for many problems \cite{KK15,KKS15,KKSV17,GVV17,GT19,KK19,CGV20,CGSV21-boolean,CGS+22-linear-space,CGS+22-monarchy,SSSV23-dicut,SSSV23-random-ordering,CGSV24}. Most of these papers study approximability, not of \emph{ordering} CSPs, but of ``standard'' CSPs where the variables can take values in a finite alphabet. (\cite{GVV17} and \cite{GT19} are the exceptions, and we will discuss them below.) Single-pass streaming algorithms with subpolynomial space are not formally a subclass of polynomial time algorithms.\footnote{The models are incomparable in the $\Omega(n^\delta)$ space regime since the streaming model has no time complexity or uniformity assumptions.} However, as far as we are aware, all known sublinear-space streaming algorithms for CSP approximation can be implemented as polynomial-time (often even linear-time!) algorithms. Indeed, there is essentially only one family of techniques for achieving nontrivial approximation ratios for CSPs via sublinear-space streaming algorithms, namely algorithms counting ``biases'' (see \cite{GVV17,CGV20,CGSV21-boolean,SSSV23-dicut}), and these algorithms can be implemented as linear-time classical algorithms and only give nontrivial guarantees for small classes of CSPs. In particular, \textcite{SSSV23-dicut} showed that the $\textsf{Max-2AND}$ problem is $0.483$-approximable in the streaming setting (whereas the trivial approximation is a $\frac14$-approximation). For more background on CSPs in the streaming model, see the surveys of \textcite{Vel23}, \textcite{Sin22}, and \textcite{Sud22}.

However, this success in obtaining non-trivial algorithms has not extended to any OCSP problem. Indeed, given the known (UGC-)hardness of OCSPs with respect to polynomial time algorithms \cite{GHM+11}, and the empirically-observed phenomenon that subpolynomial space streaming algorithms are linear time simulatable, it would be extremely surprising to find a non-trivial approximation algorithm for an OCSP using subpolynomial space. This work confirms this expectation formally, and unconditionally, by showing that there are no non-trivial sublinear (in $n$) space streaming algorithms for approximating OCSPs.

\subsection{Results}

In this paper, we prove the following theorem:

\begin{theorem}[Main theorem]\label{thm:main}
For every (finite) family of ordering  predicates $\cF$, $\mF$ is approximation-resistant (to single-pass streaming algorithms). In particular, for every $\epsilon > 0$, every $(\rho(\cF) + \epsilon)$-approximation algorithm for $\mF$ requires $\Omega(n)$ space.
\end{theorem}

In particular, for every $\epsilon > 0$, $\mas$ is not $(1/2+\epsilon)$-approximable and $\mbtwn$ is not $(1/3+\epsilon)$-approximable. \ref{thm:main} is proved in \ref{sec:lower-bound}, modulo several technical lemmas proven in later sections.

The linear space bound in \ref{thm:main} is optimal, up to logarithmic factors:

\begin{theorem}[$\tilde{O}(n)$-space algorithm]\label{thm:sparse-alg}
    Let $\cF$ denote any (finite) family of ordering predicates. For all $c > 0$ and $\epsilon > 0$, there exists a single-pass streaming algorithm $\Alg$ which, given an instance $\Psi$ of $\mocsp(\cF)$ with $n$ variables and $m \leq n^c$ constraints, outputs a $(1-\epsilon)$-approximation to $\ocspval_\Psi$ in $O(n \log^3 n/\epsilon^2)$ bits of space.
\end{theorem}

The algorithm in \ref{thm:sparse-alg} is the analogue for OCSPs of a well-known algorithm in the setting of streaming CSPs (see, \cite{KK15,KK19,CGS+22-linear-space}): It simply sparsifies the input instance down to $\tilde{O}(n/\epsilon^2)$ constraints, and then solves the $\mocsp(\cF)$ problem exactly on the sparsified instance. For completeness, we prove \ref{thm:sparse-alg} in \ref{app:sparse-alg}.

\subsection{Related works}

As far as we know, in the streaming setting, \ref{thm:main} is the first tight inapproximability result for $\mF$ for \emph{any} constraint family $\cF$ in $\Omega(n^\delta)$ space for any $\delta > 0$, and it yields tight approximability results for \emph{every} family in \emph{linear} space. 

\ref{thm:main} parallels the classical result of \textcite{GHM+11}, who prove that $\mPi$ is approximation resistant with respect to \emph{polynomial-time} algorithms, for every $\Pi$, assuming the unique games conjecture.\footnote{Without relying on the unique games conjecture, some weaker $\NP$-hardness results are known. For $\mbtwn$, since $\rho(\Pi_{\btwn}) = \frac13$, the trivial algorithm is a $\frac13$-approximation. \textcite{CS98} showed that $(\frac{47}{48}+\epsilon)$-approximating $\mathsf{Max}\btwn$ is $\NP$-hard, for every $\epsilon > 0$. The $\frac{47}{48}$ factor was improved to $\frac12$ by \textcite{AMW15}. For $\mas$, the trivial algorithm is a $\frac12$-approximation. \textcite{New00} showed that $(\frac{65}{66}+\epsilon)$-approximating $\mas$ is $\NP$-hard, for every $\epsilon > 0$. \textcite{AMW15} improved the $\frac{65}{66}$ to $\frac{14}{15}$, and \textcite{BK19} further improved the factor to $\frac23$.} In our setting of streaming algorithms, the only problem that seems to have been previously explored in the literature was $\mas$, and even in this case, a tight approximability result was not known.

In the case of $\mas$, \textcite{GVV17} proved that for every $\epsilon > 0$, $\mas$ is not $(7/8+\epsilon)$-approximable in $o(\sqrt n)$ space using a gadget reduction from the Boolean hidden matching problem \cite{GKK+08}. \textcite{GT19} indicated that $3/4$-approximating $\mas$ is hard in $o(\sqrt n)$ space. Their proof first establishes $o(\sqrt n)$-space approximation resistance for a (non-ordering) CSP called $\mug$ and then reduces from $\mug$ to $\mas$, though this reduction is not fully analyzed.

\textcite{CGMV20} recently also studied directed graph ordering problems (e.g., acyclicity testing, $(s,t)$-connectivity, topological sorting) in the streaming setting. For the problems they consider, they give {\em super-linear} space lower bounds even for multi-pass streaming algorithms. In contrast, as we mentioned above, every OCSP can be approximated arbitrarily well by simple $\tilde{O}(n)$-space algorithms, even in a single pass. However, one of the problems considered in \cite{CGMV20} is close enough to $\mas$ to merit a more detailed comparison: the \emph{minimum feedback arc set} ($\mfas$) problem, the goal of which is to output the fractional size of the smallest set of edges whose removal produces an acyclic subgraph. In other words, the sum of the $\mfas$ value of a graph and the $\mas$ value of the graph is exactly one. The authors of \cite{CGMV20} proved that for every $\kappa > 1$, $\kappa$-approximating\footnote{For minimization problems, a $\kappa$-approximation to a value $v$ is in the interval $[v,\kappa v]$. Thus approximation factors are larger than $1$.} the $\mfas$ value requires $\Omega(n^2)$ space in the streaming setting (for a single pass, and more generally $\Omega(n^{1+\Omega(1/p)}/p^{O(1)})$ space for $p$ passes). We remark that, as is typical for pairs of minimization and maximization problems defined in this way (that is, such that the values sum to 1), the hard instances involved in proving optimal inapproximability are incomparable: In particular, proving $\kappa$-inapproximability for $\mfas$ involves constructing indistinguishable (distributions of) instances with $\mas$ values $\approx 1-\epsilon$ vs. $\approx 1-\kappa \epsilon$ and thus does not imply any hardness of approximation for $\mas$.

A recent work of \textcite{CM23} studies another variant of CSPs, called \emph{phylogenetic CSPs}, where the solution space is a set of \emph{trees},\footnote{Given $n$ variables, the ordering space to a phylogenetic CSP consists of rooted binary, or more generally $k$-ary, trees with $n$ labeled leaves, and constraints specify ``hierarchical structure'' among the leaves. For instance, in the ``triple reconstruction'' problem, the constraint $(a,b,c)$ expresses that $a$ and $b$ are closer to each other than either is to $c$, where distance is measured as path length in the tree. See \cite[\S3,\S9]{CM23} for full definitions.} analogously to how for ordering CSPs, the solution space was a set of permutations. The authors prove approximation resistance of phylogenetic CSPs against polynomial-time algorithms assuming the UGC, and their proof is via value-preserving reductions from ordering CSPs, which were themselves proven approximation-resistant under the UGC in \cite{GHM+11}. It would be interesting to see if the techniques in the current work could help prove streaming approximation lower bounds for phylogenetic CSPs or other CSP variants.

\subsection{Techniques}

Our general approach to prove hardness of $\mocsp$ problems is the following: We choose a family of (standard) CSPs where hardness results are known, and then reduce these CSPs to the OCSPs at hand. While this general approach is not new, in order to achieve optimal streaming hardness results for OCSPs, we need to choose the ``source'' CSPs carefully, so that we can \emph{both} (i) apply previously-known optimal streaming hardness results for these CSPs (in our setting, we use results due to \textcite{CGS+22-linear-space}) and (ii) design \emph{streaming} reductions from these CSPs to $\mocsp$s which produce instances with (almost) optimal ratios in value between $\yes$ and $\no$ instances. In contrast, previous approaches \cite{GVV17,GT19} towards proving hardness of $\mocsp$s (in particular $\mas$) were unable to achieve optimal streaming hardness results \emph{despite} starting with optimal hardness results for the source CSP $\mug$, because of issues in designing streaming reductions which produce sufficiently large value gaps. In the remainder of this section, we describe and motivate this approach towards proving the approximation-resistance of $\mocsp$s.

\subsubsection{Special case: The intuition for $\mas$}

We start by describing our proof technique for the special case of the $\mas$ problem. Let $+_q$ denote the modular addition operator on $[q] = \{1,\ldots,q\}$: For $a,b \in [q]$, $a+_qb$ denotes the unique $c \in [q]$ such that $a+b\equiv c \pmod{q}$. Thus, for instance, $1+_qq=1$.

Similarly to earlier work in the setting of streaming approximability (e.g., the work of \textcite{KKS15}), we prove inapproximability of $\mas$ by exhibiting a pair of distributions over $\mas$ instances, which we denote $\cY$ (the ``$\yes$ instances'') and $\cN$ (the ``$\no$ instances''), satisfying the following two properties:
\begin{enumerate}
    \item $\cY$ and $\cN$ are ``indistinguishable'' to streaming algorithms (in a sense we define formally below).
    \item With high probability, $\cY$ has high $\mas$ values ($\approx 1$) and $\cN$ has low $\mas$ values ($\approx \frac12$).
\end{enumerate}
The existence of such distributions would suffice to establish the theorem: there cannot be any streaming approximation for $\mas$, since any such algorithm would be able to distinguish these distributions. But how are we to construct distributions $\cY$ and $\cN$ satisfying these properties?

The ``recipe'' which has proved successful in past works for proving streaming approximation resistance for \emph{``standard''} CSPs is roughly to let the $\cN$ instances be completely random, while $\cY$ instances are sampled with ``hidden structure'' which guarantees a very good assignment. Then, one would show that streaming algorithms cannot detect the existence of such hidden structure, via a reduction to a communication game (typically a variant of Boolean hidden matching \cite{GKK+08,VY11}). In the OCSP setting, we might hope that the hidden structure could simply be an ordering; that is, we could hope to define $\cY$ by first sampling a random ordering of the variables, then sampling constraints that go forward with respect to this ordering, and then perhaps adding some noise. But unfortunately, we don't know how to directly prove communication lower bounds for such problems.

Hence, instead of seeking to prove indistinguishability directly, we turn back to earlier streaming hardness-of-approximation results proven in the context of standard CSPs. In this setting, variables take on values in a finite alphabet $[q]$ (i.e., the solution space is $[q]^n$), and $k$-ary predicates $f : [q]^k \to \{0,1\}$ can be applied to small subsets of variables to form constraints. We make two observations about this definition. Firstly, in a CSP, two variables may be assigned the same value in $[q]$, whereas in an OCSP, every variable must get a distinct value in $[n]$. Secondly, for a CSP or OCSP defined by a single binary predicate, each constraint simply specifies a pair $(j_1,j_2)$ of distinct indices in $[n]$; by extension, instances can be viewed equivalently as directed graphs on $[n]$ (allowing multiple edges). Thus, we can view instances of binary CSPs as instances of $\mas$, and vice versa.

The plan is as follows. We'll define a binary predicate denoted $\Pi_{\mas}^{\qcoar} : [q]^2 \to \{0,1\}$. Let $\mcsp(\Pi_{\mas}^{\qcoar})$ denote the problem of maximizing $\Pi_{\mas}^{\qcoar}$ constraints applied to assignments in $[q]^n$. The hope is that for a careful choice of the alphabet size $q$ and the predicate $\Pi_{\mas}^{\qcoar}$, we can \emph{reuse} indistinguishable $\yes$/$\no$ distributions for $\mcsp(\Pi_{\mas}^{\qcoar})$ --- in particular, those constructed in the recent work of \textcite{CGS+22-linear-space} --- as $\yes$/$\no$ distributions for $\mas$. This requires us to relate the values of an $\mas$ instance and the corresponding $\mcsp(\Pi_{\mas}^{\qcoar})$ instance. To be precise, for an $\mas$ instance $\Psi$, let $\Psiq$ denote the $\mcsp(\Pi_{\mas}^{\qcoar})$ instance with the exact same list of constraints and let $\cspval_{\Psiq}$ denote the value of this instance. We choose $q$ and $\Pi_{\mas}^{\qcoar}$ so as to imply the following four properties about the indistinguishable distributions $\cY$ and $\cN$ ``given to us'' by \cite{CGS+22-linear-space}:

\begin{enumerate}
    \item With high probability over $\Psi \sim \cY$, $\cspval_{\Psiq} \approx 1$.\label{item:coarse-val-y}
    \item With high probability over ${\Psi \sim \cN}$, $\cspval_{\Psiq} \approx \frac12$.\label{item:coarse-val-n}
    \item For all $\Psi$, $\ocspval_\Psi \geq \cspval_{\Psiq}$.\label{item:gap-y}
    \item With high probability over ${\Psi \sim \cN}$, $\ocspval_\Psi$ is not much larger than $\cspval_{\Psiq}$.\label{item:gap-n}
\end{enumerate}

Together, these items will suffice to prove the theorem since \ref{item:coarse-val-n} and \ref{item:gap-n} together imply that with high probability over $\Psi \sim \cN$, $\ocspval_\Psi \approx \frac12$, while \ref{item:coarse-val-y} and \ref{item:gap-y} together imply that with high probability over $\Psi \sim \cY$, $\ocspval_\Psi \approx 1$.

In order to satisfy these criteria, we define the CSP predicate as follows. Recall that $\Pi_{\mas}(1,2)= 1$ while $\Pi_{\mas}(2,1)=0$. We define the constraint predicate $\Piq_{\mas} : [q]^2 \to \{0,1\}$ by $\Piq_{\mas}(b_1,b_2) = 1$ iff $b_1 < b_2$. We call this the \emph{$q$-coarsening} of $\Pi_{\mas}$, and it gives $\mcsp(\Pi_{\mas}^{\qcoar})$ the following ``scheduling'' interpretation. Recall, the goal of $\mas$ is to schedule $n$ tasks, each task $i$ is assigned a distinct position $\sigma_i \in [n]$ in the schedule, and the goal is to maximize constraints of the form $(j_1,j_2)$ requiring that task $j_1$ takes place before task $j_2$, i.e., $\sigma_{j_1} < \sigma_{j_2}$. In $\mcsp(\Pi_{\mas}^{\qcoar})$, the goal is to schedule $n$ tasks in $q$ \emph{batches}: Each task $i$ receives a (not necessarily distinct!) batch $\sigma_i \in [q]$, and constraints $(j_1,j_2)$ still require that $\sigma_{j_1} < \sigma_{j_2}$, that is, $j_1$'s assigned batch is earlier than $j_2$. \ref{item:gap-y} follows immediately in this interpretation: Given any batched schedule $\vecb \in [q]^n$, we can immediately ``lift'' to a non-batched schedule $\vecb^{\refn} \in \sym_n$ by arbitrarily ordering the tasks in each batch, which can only increase the number of satisfied constraints.

Proving \ref{item:gap-n} is the meat of the argument. Note that if we set $q=n$, $\mcsp(\Pi_{\mas}^{\qcoar})$ becomes the same problem as $\mas$, and hence \ref{item:gap-n} is trivial! However, we can only apply the inapproximability results of \cite{CGS+22-linear-space} (specifically, the indistinguishability of their distributions $\cY$ and $\cN$) when $q$ is a constant. Briefly, the \cite{CGS+22-linear-space} results roughly state that a predicate $f : [q]^k \to \{0,1\}$ is inapproximable when its support satisfies a property that they call \emph{width}: it contains most of a ``diagonal'', in the sense that for some $\veca \in [q]^k$, the set $\{c \in [q] : f(\veca +_q (c,\ldots,c)) = 1\}$ is large. Luckily for us, $\Pi_{\mas}^{\qcoar}$ has this property with $\veca = (1,2)$; indeed, $\Pi_{\mas}^{\qcoar}(1+_qc,2+_qc) = 1$ unless $c = q-1$ (in which case $1+_qc = q$ while $2+_qc = 1$).

To actually prove \ref{item:gap-n}, then, we can no longer use the results of \cite{CGS+22-linear-space} as a black box. Specifically, we need to understand the structure of the $\no$ distribution $\cN$ (beyond \ref{item:coarse-val-n} and its indistinguishability from $\cY$). We show that instances drawn from $\cN$ are (with high probability) ``small partition expanders'' in a specific sense: for every partition of the set of variables into $q$ blocks of roughly equal size, very few constraints, specifically a $o(1)$ fraction, involve two variables in the same block. (See \ref{def:sphe}.) Now, we think of a ``schedule'' $\vecsigma \in \sym_n$ as giving rise to a ``batched schedule'' $\vecsigma^{\qcoar} \in [q]^n$ in the following way: If task $i$ is scheduled in position $\sigma_i \in [n]$, then we place it in batch $\approx \sigma_i q / n$. Thus the first $\approx n/q$ scheduled tasks are placed in batch $1$, the next $\approx n/q$ in batch $2$, etc. Hence, whenever a constraint $(j_1,j_2)$ is satisfied by $\vecsigma$ (as an $\mas$ constraint), it will also be satisfied by $\vecsigma^{\qcoar}$ (as a $\mcsp(\Pi_{\mas}^{\qcoar})$ constraint), \emph{unless} $j_1$ and $j_2$ end up in the same batch; but by the small partition expansion condition, this happens only for $o(1)$ fraction of the constraints. Hence $\ocspval_{\Psi} \leq \cspval_{\Psi^{\qcoar}} + o(1)$.

\subsubsection{Extending to general ordering CSPs}

Extending the idea to other OCSPs follows the same basic outline. Given the constraint predicate $\Pi : \sym_k \to \{0,1\}$ (of arity $k$) and positive integer $q$, we define $\Piq : [q]^k \to \{0,1\}$ analogously to $\Piq_\mas$: $\Piq(\vecb)$ is $\Pi(\ord(\vecb))$ if $\ord(\vecb) \neq \bot$ (i.e., $\vecb$'s entries are all distinct), and $0$ otherwise. We then describe the $\yes$ and $\no$ distributions of $\mPiq$ which the general theorem of \cite{CGS+22-linear-space} shows are indistinguishable to $o(n)$ space algorithms, again taking advantage of the fact that $\Pi^q$'s support mostly contains a diagonal. Finally, we give an analysis of the partition expansion in the $\no$ instances arising from the construction in~\cite{CGS+22-linear-space}. Specifically, we show that the instances are now a ``small partition hypergraph expander'', in the sense that for every partition of the $n$ variables into $q$ blocks of roughly equal size, very few constraints involve even two vertices from the same block.
 
\subsubsection{Further remarks}

Our notion of coarsening is somewhat similar to, but not the same as, that used in previous works, notably~\cite{GHM+11}. In particular, the techniques used to compare the OCSP value (before coarsening) with the standard CSP value (after coarsening) are somewhat different: Their analysis involves more sophisticated tools such as influence of variables and Gaussian noise stability. The proof in our setting, in contrast, uses a more elementary analysis of the type common with random graphs.

In the rest of the paper, in the interest of self-containedness, we will avoid invoking the work of \cite{CGS+22-linear-space} on linear-space streaming CSP  inapproximability where possible. Instead, we will explicitly define the distributions $\cY$ and $\cN$ over $\mPi$ instances for arbitrary ordering predicates $\Pi$ and analyze them directly, without invoking any prior analyses of their coarsened CSP values (which would require formally defining the notion of the ``width'' of predicates in \cite{CGS+22-linear-space}). Hence, we'll only need to invoke \cite{CGS+22-linear-space} in the context of using communication lower bounds to prove indistinguishability of $\cY$ and $\cN$. We also manage to prove a stronger statement about the coarsened $\cY$ distribution (though it is unnecessary for our application): Its value is high with probability $1$, as opposed to just $1-o(1)$ (which would be implied by the analysis in \cite{CGS+22-linear-space}).

\paragraph{Organization of the rest of the paper.} In \ref{sec:prelim} we introduce some additional notation and background material. In \ref{sec:lower-bound}, we introduce two distributions on $\mPi$ instances, the \yes\ distribution $\cY$ and the \no\ distribution $\cN$; state lemmas asserting that these distributions are concentrated on instances with high, and respectively low, OCSP value; and that these distributions are indistinguishable to (single-pass) small-space streaming algorithms; and then prove \ref{thm:main} modulo these lemmas. Finally, we prove the lemmas on the OCSP values in \ref{sec:bounds}, and prove the indistinguishability lemma in \ref{sec:streaming}.

\paragraph{Acknowledgements.} A previous version of this paper appeared in APPROX 2021~\cite{SSV21-conf-version}. The results starting from that version improve on an earlier version of the paper~\cite{SSV21-early-version} that gave only $\Omega(\sqrt{n})$ space lower bounds for all OCSPs. Our improvement to $\Omega(n)$ space lower bounds comes by invoking the more recent results of \textcite{CGS+22-linear-space}, whereas our previous version used the strongest lower bounds for CSPs that were available at the time from an earlier work of \textcite{CGSV24}.\footnote{The conference version of \cite{CGSV24} appeared in 2021. The results in \cite{CGSV24} are quantitatively weaker for the problems considered in \cite{CGS+22-linear-space}, though their results apply to a broader collection of problems. Interestingly, for our application, which covers {\em all} OCSPs, the narrower set of problems considered in \cite{CGS+22-linear-space} suffices.}

We would like to thank the anonymous referees at \emph{Computational Complexity} and APPROX for their helpful comments.\vspace{0.1in}

\noindent \textsc{n.g.s.} was supported by an NSF Graduate Research Fellowship (Award DGE2140739). Work was done in part when the author was an undergraduate student at Harvard University.

\noindent \textsc{m.s.} was supported in part by a Simons 
Investigator Award and NSF Award CCF 1715187. 

\noindent \textsc{s.v.} was supported in part by a Google Ph.D. Fellowship, a Simons Investigator Award to Madhu Sudan, and NSF Award CCF 2152413. Work was done in part when the author was a graduate student at Harvard University.

\section{Preliminaries and definitions}\label{sec:prelim}

\subsection{Additional notation}

The \emph{support} of an ordering constraint predicate $\Pi : \sym_k \to \{0,1\}$ is the set $\supp(\Pi) = \{\vecpi \in \sym_k : \Pi(\vecpi) =1\}$.

We first define a notion of ``$k$-hypergraphs''. (These are $k$-uniform ordered hypergraphs with multiple hyperedges and without self-loops.)
Given a finite set $V$, an (ordered, self-loop-free) {\em $k$-hyperedge} $\vecj = (j_1,\ldots,j_k)$ is a sequence of $k$ distinct elements $j_1,\ldots,j_k \in V$. We stress that the ordering of vertices within an edge is important to us. 
An \emph{$k$-hypergraph} $G = (V,E)$ is given by a set of vertices $V$ and a multiset $E \subseteq V^k$ of $k$-hyperedges on $V$. A $k$-hyperedge $\vecj$ is \emph{incident} on a vertex $v$ if $v$ appears in $\vecj$. Let $\Gamma(\vecj) \subseteq V$ denote the set of vertices to which a $k$-hyperedge $\vecj$ is incident, and let $m = m(G)$ denote the number of $k$-hyperedges in $G$.

A $k$-hypergraph is a \emph{$k$-hypermatching} if it has the property that no pair of (distinct) $k$-hyperedges is incident on the same vertex. We let $\matchings_{k,\alpha}(n)$ denote the uniform distribution over all $k$-hypermatchings on $[n]$ with $\alpha n$ edges.

A vector $\vecb = (b_1,\ldots,b_n) \in [q]^n$ may be viewed as a \emph{$q$-partition} of $[n]$ into \emph{blocks} $\vecb^{-1}(1),\ldots,\vecb^{-1}(q)$, where the $i$-th block $\vecb^{-1}(i)$ is defined as the set of indices $\{j \in [n] : b_j = i\}$. Given $\vecb = (b_1,\ldots,b_n) \in [q]^n$ and an indexing vector $\vecj = (j_1,\ldots,j_k) \in [n]^k$, we define $\vecb|_{\vecj} = (b_{j_1},\ldots,b_{j_k}) \in [q]^k$.

Given an instance $\Psi$ of $\mPi$ on $n$ variables, we define its \emph{constraint hypergraph} $G(\Psi)$ to be the $k$-hypergraph on $[n]$ consisting of the $k$-hyperedge $\vecj$ for each constraint $(\Pi,\vecj)$ in $\Psi$. We also let $m(\Psi)$ denote the number of constraints in $\Psi$ (equiv., the number of $k$-hyperedges in $G(\Psi)$).

\subsection{Concentration bounds}

We require the following Azuma-style concentration inequality for (not necessarily independent) Bernoulli random variables with bounded conditional expectations taken from \textcite{KK19}:

\begin{lemma}[{{\cite[Lemma~2.5]{KK19}}}]\label{lemma:azuma}
Let $0 < p < 1$. Let $X_1,\ldots,X_m$ be $\{0,1\}$-valued random variables such that for every $i \in [m]$, $\Exp[X_i\mid X_1,\ldots,X_{i-1}] \leq p$. Then for every $\eta > 0$, \[ \Pr\left [\sum_{i=1}^m X_i \geq (p+\eta) m \right] \leq \exp\left(-\left(\frac{\eta^2}{2(p+\eta)}\right) m\right). \]
\end{lemma}

We also require standard Chernoff bounds for sums of independent Bernoulli variables which we state here for completeness:

\begin{lemma}[Chernoff bounds]\label{lemma:chernoff}
Let $0 < p < 1$. Let $X_1,\ldots,X_m$ be $\{0,1\}$-valued random variables such that for every $i \in [m]$, $\Exp[X_i] = p$. Then:
\begin{enumerate}
    \item For every $\eta > 0$, \[ \Pr\left[ \sum_{i=1}^m X_i \leq (p-\eta) m\right] \leq \exp\left(- \left(\frac{\eta^2}{2p}\right) m\right). \]\label{item:chernoff:lower}
    \item For all $\eta > 0$, \[ \Pr\left[ \left\lvert \sum_{i=1}^m X_i - pm\right\rvert \geq \eta m \right] \leq 2\exp\left(- \left(\frac{\eta^2}{3p}\right) m\right). \]\label{item:chernoff:twosided}
\end{enumerate}
\end{lemma}

(Note that the lower bounds in these lemmas are trivial if $\eta > p$.)

\subsection{Stirling's approximation}

Finally, we state a standard form of Stirling's bound for the factorial:

\begin{lemma}[Stirling approximation]\label{lem:stirling}
    For all $n \in N$, \[ \sqrt{2\pi n} (n/e)^n < n! < 2 \sqrt{2\pi n} (n/e)^n. \]
\end{lemma}


\section{The streaming space lower bound}\label{sec:lower-bound}

In this section, we prove our main theorem (\ref{thm:main}), modulo some lemmas that we prove in later sections. We focus first on the following special case for single-predicate families:

\begin{theorem}[Main theorem (single-predicate case)]\label{thm:main-single}
For every $k \in \N$ and every predicate $\Pi : \sym_k \to \{0,1\}$, $\mPi$ is approximation-resistant (to single-pass streaming algorithms). In particular, for every $\epsilon > 0$, every $(\rho(\Pi) + \epsilon)$-approximation algorithm for $\mPi$ requires $\Omega(n)$ space.
\end{theorem}

Indeed, given \ref{thm:main-single}, \ref{thm:main} follows immediately:

\begin{proof}[Proof of \ref{thm:main}]
Given any family $\cF$ of predicates, let $\Pi$ have minimal random assignment value $\rho$ over predicates in $\cF$, so that $\rho(\Pi) = \rho(\cF)$. Then since every instance of $\mPi$ is also an instance of $\mF$, \ref{thm:main-single} immediately implies $(\rho(\cF)+\epsilon)$-approximations for $\mF$ require $\Omega(n)$ space.
\end{proof}

Our lower bound is proved, as is usual for such statements, by showing that no small space algorithm can ``distinguish'' $\yes$ instances with OCSP value at least $1-\epsilon/2$, from $\no$ instances with OCSP value at most $\rho(\Pi)+\epsilon/2$. Such a statement is in turn proved by exhibiting two distributions, the $\yes$ distribution $\cY$ and the $\no$ distribution $\cN$, and showing these are indistinguishable. Specifically, we carefully choose some parameters $q, T, \alpha$ and a permutation $\vecpi \in \sym_k$, and define two distributions $\cY = \DY$ and $\cN = \DN$ over $n$-variable instances of $\mPi$.
We claim that for our choice of parameters $\cY$ is supported on instances with value at least $1 - \epsilon/2$ --- this is asserted in \ref{lemma:y-lower-bound} below. 
Similarly, we claim that $\cN$ is mostly supported (with probability $1-o(1)$) on instances with value at most $\rho(\Pi)+\epsilon/2$ (see \ref{lemma:n-upper-bound}). 
Finally, we assert in \ref{lem:our-indist} that any algorithm that distinguishes $\cY$ from $\cN$ with ``advantage'' at least $1/8$ (i.e., accepts $\Psi\sim\cY$ with probability $1/8$ more than $\Psi \sim \cN$) requires $\Omega(n)$ space. 

Assuming \ref{lemma:y-lower-bound}, \ref{lemma:n-upper-bound}, and \ref{lem:our-indist}, the proof of \ref{thm:main-single} is straightforward and given at the end of this section. We prove \ref{lemma:y-lower-bound} and \ref{lemma:n-upper-bound} in \ref{sec:bounds} and \ref{lem:our-indist} in \ref{sec:streaming}.

\subsection{Distribution of hard instances}\label{sec:hard_distributions}

We now formally define our $\yes$ and $\no$ distributions for $\mPi$.

\begin{definition}[$\DY$ and $\DN$]\label{def:YES_NO_MaxOCSP_dist}
For $k \in \N$ and $\Pi : \sym_k \to \{0,1\}$, let $q,n,T \in \N$ and $\alpha > 0$, $q \geq k$, and let $\vecpi \in \supp(\Pi)$. We define two distributions over $\mPi$ instances with $n$ variables, the $\yes$ distribution $\DY$ and the $\no$ distribution $\DN$, as follows:
\begin{enumerate}
    \item Sample a uniformly random $q$-partition $\vecb = (b_1,\ldots,b_n) \in [q]^n$.
    \item Sample $T$ hypermatchings $\tilde G_1,\ldots,\tilde G_T \sim \matchings_{k,\alpha}(n)$ independently.
    \item For each $t \in [T]$, do the following:
    \begin{itemize}
        \item Let $G_t$ be an empty $k$-hypergraph on $[n]$.
        \item For each $k$-hyperedge $\vecj = (j_1,\ldots,j_k) \in E(\tilde G_t)$:
        \begin{itemize}
            \item $\yes$ case: If there exists $c \in [q]$ such that $\vecb|_\vecj = \vecpi +_q (c,\ldots,c)$, add $\vecj$ to $G_t$ with probability $1/q$. (Here $\vecpi$ is viewed as $k$-tuple in $[k]^k\subseteq[q]^k$.)
            \item $\no$ case: Add $\vecj$ to $G_t$ with probability $\frac1{q^k}$.
        \end{itemize}
    \end{itemize}
    \item Set $G \gets G_1 \cup \cdots \cup G_T$.
    \item Return the $\mPi$ instance $\Psi$ on $n$ variables given by the constraint hypergraph $G$.
\end{enumerate}
\end{definition}

We say that an algorithm $\Alg$ achieves \emph{advantage} $\delta$ in distinguishing $\DY$ from $\DN$ if there exists an $n_0$ such that for all $n \geq n_0$, we have 
$$\left\lvert\Pr_{\Psi \sim \DY}[\Alg(\Psi)=1] - \Pr_{\Psi \sim \DN}[\Alg(\Psi)=1]\right\rvert \geq \delta.$$

We make several remarks on this definition. Firstly, note that the constraints within $\DY$ and $\DN$ do not directly depend on $\Pi$. We still parameterize the distributions by $\Pi$, since they are formally distributions over $\mPi$ instances; $\Pi$ also determines the arity $k$ and the set of allowed permutations $\vecpi$ in the $\yes$ case. Secondly, we note that when sampling an instance from $\DN$, the partition $\vecb$ is ignored, and so $\DN$ is ``random''. Hence these instances fit into the typical streaming lower bound ``recipe'' of ``random graphs vs. random graphs with hidden structure''. Finally, we observe that the number of constraints in both distributions is distributed as a sum of $m = n \alpha T$ independent Bernoulli$(\frac1{q^k})$ random variables.

In the following section, we state lemmas which highlight the main properties of the distributions above. See \ref{fig:mas} for a visual interpretation of the distributions in the case of $\mas$.

\begin{figure}
    \centering
    \begin{subfigure}{\textwidth}
        \centering
\begin{tikzpicture}[scale=0.75,vertex/.style={fill=black},block/.style={draw=black,fill=white!70!lightgray},edge/.style={->,line width=1.5pt,-{Latex[width=8pt,length=10pt]}}]
\draw[block] (-.4,-.4) rectangle (1.4,1.4);
\node at (-.2,1.7) {$\mathbf{1}$};
\draw[vertex] (0,0) circle (3pt);
\draw[vertex] (0,1) circle (3pt);
\draw[vertex] (1,0) circle (3pt);
\draw[vertex] (1,1) circle (3pt);
\draw[block] (2.6,.1) rectangle (4.4,.9);
\node at (2.8,1.2) {$\mathbf{2}$};
\draw[vertex] (3,.5) circle (3pt);
\draw[vertex] (4,.5) circle (3pt);
\draw[block] (5.6,.1) rectangle (6.4,.9);
\node at (5.8,1.2) {$\mathbf{3}$};
\draw[vertex] (6,.5) circle (3pt);
\draw[block] (7.6,-0.4) rectangle (9.4,1.4);
\node at (7.8,1.7) {$\mathbf{4}$};
\draw[vertex] (8.5,1) circle (3pt);
\draw[vertex] (8,0) circle (3pt);
\draw[vertex] (9,0) circle (3pt);
\draw[block] (10.6,.1) rectangle (12.4,.9);
\node at (10.8,1.2) {$\mathbf{5}$};
\draw[vertex] (11,.5) circle (3pt);
\draw[vertex] (12,.5) circle (3pt);
\draw[edge,draw=black!40!green] (1,1) to[bend left=55] (4,.5);
\draw[edge,draw=black!40!green] (4,.5) to[bend left=40] (6,.5);
\draw[edge,draw=black!40!green] (3,.5) to[bend right=40] (6,.5);
\draw[edge,draw=black!40!green] (6,.5) to[bend left=40] (8.5,1);
\draw[edge,draw=black!40!green] (6,.5) to[bend right=40] (8,0);
\draw[edge,draw=black!40!green] (8,0) to[bend right=40] (11,.5);
\draw[edge,draw=black!40!green] (9,0) to[bend left=10] (11,.5);
\draw[edge,draw=black!10!red] (12,.5) to[bend left=30] (0,0);
\draw[edge,draw=black!10!red] (11,.5) to[bend right=35] (0,1);
\end{tikzpicture}
    \caption{Constraint graph of a sample $\mas$ instance drawn from $\cY$}
    \end{subfigure}
    \begin{subfigure}{\textwidth}
    \centering
\begin{tikzpicture}[scale=0.75,vertex/.style={fill=black},block/.style={draw=black,fill=white!70!lightgray},edge/.style={->,line width=1.5pt,-{Latex[width=8pt,length=10pt]}}]
\draw[block] (-.4,-.4) rectangle (1.4,1.4);
\node at (-.2,1.7) {$\mathbf{1}$};
\draw[vertex] (0,0) circle (3pt);
\draw[vertex] (0,1) circle (3pt);
\draw[vertex] (1,0) circle (3pt);
\draw[vertex] (1,1) circle (3pt);
\draw[block] (2.6,.1) rectangle (4.4,.9);
\node at (2.8,1.2) {$\mathbf{2}$};
\draw[vertex] (3,.5) circle (3pt);
\draw[vertex] (4,.5) circle (3pt);
\draw[block] (5.6,.1) rectangle (6.4,.9);
\node at (5.8,1.2) {$\mathbf{3}$};
\draw[vertex] (6,.5) circle (3pt);
\draw[block] (7.6,-0.4) rectangle (9.4,1.4);
\node at (7.8,1.7) {$\mathbf{4}$};
\draw[vertex] (8.5,1) circle (3pt);
\draw[vertex] (8,0) circle (3pt);
\draw[vertex] (9,0) circle (3pt);
\draw[block] (10.6,.1) rectangle (12.4,.9);
\node at (10.8,1.2) {$\mathbf{5}$};
\draw[vertex] (11,.5) circle (3pt);
\draw[vertex] (12,.5) circle (3pt);
\draw[edge,draw=black!40!green] (1,1) to[bend left=55] (4,.5);
\draw[edge,draw=black!40!green] (3,.5) to[bend left=55] (11,.5);
\draw[edge,draw=black!40!green] (1,0) to[bend right=45] (6,.5);
\draw[edge,draw=black!40!green] (6,.5) to[bend right=20] (8.5,1);
\draw[edge,draw=orange] (8,0) to[bend right=40] (9,0);
\draw[edge,draw=orange] (4,.5) to[bend left=40] (3,.5);
\draw[edge,draw=black!10!red] (6,.5) to[bend right=40] (0,1);
\draw[edge,draw=black!10!red] (8.5,1) to[bend right=20] (6,.5);
\draw[edge,draw=black!10!red] (11,.5) to[bend left=35] (0,0);
\end{tikzpicture}
    \caption{Constraint graph of a sample $\mas$ instance drawn from $\cN$}
    \end{subfigure}
    \caption[]{The constraint graphs of $\mas$ instances which could plausibly be drawn from $\cY$ and $\cN$, respectively, for $q=5$ and $n=12$. Recall that $\mas$ is a binary $\mocsp$ with ordering constraint function $\Pi_{\mas}$ supported only on $(1,2)$. According to the definition of $\cY$ (see \ref{def:YES_NO_MaxOCSP_dist}, with $\vecpi = (1,2)$), instances are sampled by first sampling a $q$-partition $\vecb = (b_1,\ldots,b_n) \in [q]^n$, and then sampling some constraints; every sampled constraint $(j_1,j_2)$ must satisfy $b_{j_2} \equiv b_{j_1}+1 \pmod q$. On the other hand, there are no requirements on $(b_{j_1},b_{j_2})$ for instances sampled from $\cN$. Above, the blocks of the partition $\vecb$ are labeled $1,\ldots,5$, and the reader can verify that the edges satisfy the appropriate requirements. We also color the edges in a specific way: We color an edge $(j_1,j_2)$ green, orange, or red if $b_{j_2} > b_{j_1}$, $b_{j_2} = b_{j_1}$, or $b_{j_2} < b_{j_1}$, respectively. This visually suggests important elements of our proofs that $\cY$ has $\mas$ values close to $1$ and $\cN$ has $\mas$ values close to $\frac12$ (for formal statements, see \ref{lemma:y-lower-bound} and \ref{lemma:n-upper-bound}, respectively). Specifically, in the case of $\cY$, if we arbitrarily arrange the vertices in each block, we will get an ordering in which every green edge is satisfied, and we expect all but $\frac1q$ fraction of the edges to be green (i.e., all but those which go from block $q$ to block $1$). On the other hand, if we executed a similar process in $\cN$, the resulting ordering would satisfy all green edges and some subset of the orange edges; however, in expectation, these account only for $\frac{q(q+1)}{2q^2} = \frac{q+1}{2q} \approx \frac12$ fraction of the edges.
    }
    \label{fig:mas}
\end{figure}

\subsection{Statement of key lemmas}

Our first lemma shows that $\cY$ is supported on instances of high value.

\begin{lemma}[$\cY$ has high $\mPi$ values]\label{lemma:y-lower-bound}
For every ordering constraint predicate $\Pi : \sym_k \to \{0,1\}$, every $\vecpi \in \supp(\Pi)$ and $\Psi\sim \DY$, we have $\ocspval_\Psi \geq 1-\frac{k-1}{q}$ (i.e., this occurs with probability 1).
\end{lemma}

We prove \ref{lemma:y-lower-bound} in \ref{sec:y-lower-bound}. Next, we assert that $\cN$ is supported mostly on instances of low value. 

\begin{lemma}[$\cN$ has low $\mPi$ values]\label{lemma:n-upper-bound}
For every ordering constraint predicate $\Pi : \sym_k \to \{0,1\}$, and every $\epsilon >0$, there exists $q_0 \in \N$ and $\alpha_0 \geq 0$ such that for all $q \geq q_0$ and $\alpha \leq \alpha_0$, there exists $T_0 \in \N$ such that for all $T \geq T_0$, for sufficiently large $n$, we have \[ \Pr_{\Psi \sim \DN}\left[\ocspval_\Psi \geq \rho(\Pi) + \frac{\epsilon}2\right] \leq 0.01. \]
\end{lemma}

We prove \ref{lemma:n-upper-bound} in \ref{sec:n-upper-bound}. We note that this lemma is more technically involved than 
\ref{lemma:y-lower-bound} and this is the proof that needs the notion of ``small partition expanders''.
Finally, the following lemma asserts the indistinguishability of $\cY$ and $\cN$ to small space streaming algorithms. We remark that this lemma follows directly from the work of \cite{CGS+22-linear-space}, but we prove it for completeness in \ref{sec:streaming}.

\begin{lemma}[$\cY$ and $\cN$ are indistinguishable]\label{lem:our-indist}
For every $q \geq k \in \N$ there exists  $\alpha_0(k)>0$ such that for every
$T\in\N$, $\alpha\in(0,\alpha_0(k)]$ the following holds: 
For every ordering constraint predicate $\Pi :\sym_k \to \{0,1\}$ and $\vecpi\in\supp(\Pi)$, every streaming algorithm distinguishing
$\DY$ from  $\DN$ with advantage $1/8$ for all lengths $n$ uses space $\Omega(n)$.
\end{lemma}

\subsection{Proof of \ref{thm:main-single}} 

We now prove \ref{thm:main-single}.

\begin{proof}[Proof of \ref{thm:main-single}]
Let $\Alg$ be an algorithm for $\mPi$ that uses space $s(n)$ and achieves a $(\rho(\Pi)+\epsilon)$-approximation. Fix $\vecpi \in \supp(\Pi)$. Consider the algorithm $\Alg'$ defined as follows: on input $\Psi$, an instance of $\mPi$, if $\Alg(\Psi)\ge \rho(\Pi)+\frac{\epsilon}{2}$, then $\Alg'$ outputs $1$, else, it outputs $0$. Observe that $\Alg'$ uses $O(s(n))$ space. Set $q_0\ge\frac{2(k-1)}{\epsilon}$ such that the condition of \ref{lemma:n-upper-bound} holds. Set $\alpha_0\in (0,\alpha_0(k)]$ such that the conditions of \ref{lemma:n-upper-bound} holds. Consider any $q\ge q_0$ and $\alpha\le \alpha_0$: let $T_0$ be set as in \ref{lemma:n-upper-bound}. Consider any $T\ge T_0$: since $q\ge \frac{2(k-1)}{\epsilon} $, it follows from \ref{lemma:y-lower-bound} that for $\Psi \sim \DY$, we have $\ocspval_{\Psi} \ge 1-\frac{\epsilon}{2}$, and hence with probability at least $2/3$, $\Alg(\Psi)\ge \rho(\Pi)+\frac{\epsilon}{2}$. Therefore, $\Exp_{\Psi \sim \DY}[\Alg(\Psi)=1]\ge 2/3$. Similarly, by the choice of $q_0,\alpha_0,T_0$, it follows from \ref{lemma:n-upper-bound} that \[ \Pr_{\Psi \sim \DN}\left[\ocspval_\Psi \geq \rho(\Pi) + \frac{\epsilon}2\right] \leq 0.01, \]and hence, $\Exp_{\Psi \sim \DN}[\Alg(\Psi)=1] \le \frac{1}{3} + 0.01$. Therefore, $\Alg'$ distinguishes $\DY$ from  $\DN$ with advantage $1/8$. By applying \ref{lem:our-indist}, we conclude that $\Alg$ uses $s(n) \geq \Omega(n)$ space.
\end{proof}

\section{Bounds on $\mPi$ values of $\cY$ and $\cN$}\label{sec:bounds}

The goal of this section is to prove our technical lemmas \ref{lemma:y-lower-bound} and \ref{lemma:n-upper-bound} which, respectively, lower bound the $\mPi$ values of $\DY$ and upper bound the $\mPi$ values of $\DN$.

\subsection{CSPs and coarsening}\label{sec:standard-csps}

In order to prove the lemmas, we recall the definition of (standard) \emph{constraint satisfaction problems (CSPs)}, whose solution spaces are $[q]^n$ (as opposed to $\sym_n$ for OCSPs), and define an operation called \emph{$q$-coarsening} on $\mocsp$s, which restricts the solution space from $\sym_n$ to $[q]^n$.

A \emph{maximum constraint satisfaction problem}, $\mcsp(f)$, is specified by a single constraint predicate $f : [q]^k \rightarrow \{0,1\}$, for some positive integer $k$. An {\em instance} of $\mcsp(f)$ on $n$ variables is given by $m$ constraints $C_1,\ldots,C_m$ where $C_i = (f,\vecj(i))$, i.e., the application of the predicate $f$ to the variables $\vecj(i) = (j(i)_1,\ldots,j(i)_k)$. (Again, $f$ is omitted when clear from context.) The \emph{value} of an assignment $\vecb \in [q]^n$ on an instance $\Phi = (C_1,\ldots,C_m)$, denoted $\cspval_\Phi(\vecb)$, is the fraction of constraints satisfied by $\vecb$, i.e., $\cspval_\Phi(\vecb)=\tfrac{1}{m}\sum_{i\in[m]} f(\vecb|_{\vecj(i)})$, where (recall) $\vecb|_{\vecj} = (b_{j_1},\ldots,b_{j_k})$ for $\vecb = (b_1,\ldots,b_n), \vecj = (j_1,\ldots,j_k)$. The optimal value of $\Phi$ is defined as $\cspval_\Phi = \max_{\vecb \in [q]^n}\{\cspval_\Phi(\vecb)\}$.

\begin{definition}[$q$-coarsening]\label{defn:coarsening}
Let $\Pi : \sym_k \to \{0,1\}$ be an ordering predicate. For $q \in \N$, the \emph{$q$-coarsening} of $\Pi$ is the predicate $\Piq: [q]^k \to \{0,1\}$ defined by $\Piq(\veca) = 1$ iff $\Pi(\ord(\veca))=1$ (if $\ord(\veca) = \bot$, i.e., $\veca$'s entries are not all distinct, we define $\Piq(\veca) = 0$). The \emph{$q$-coarsening of the problem} $\mPi$ is the problem $\mPiq$, and the \emph{$q$-coarsening of an instance} $\Psi$ of $\mPi$ is the instance $\Psiq$ of $\mPiq$ given by the same constraint hypergraph.
\end{definition}

The following lemma captures the idea that coarsening restricts the space of possible solutions; compare to \ref{lemma:sphe-gap-bound} below.

\begin{lemma}\label{lemma:coarsening-monotonicity}
If $q \in \N$ and $\Psi$ is an instance of $\mPi$, then $\ocspval_\Psi \geq \cspval_{\Psiq}$.
\end{lemma}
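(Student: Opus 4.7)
The plan is to exhibit, for any $[q]$-assignment $\vecb$ for $\Phi$, an ordering $\vecsigma \in \sym_n$ for $\Psi$ that satisfies every constraint satisfied by $\vecb$; then taking $\vecb$ to be an optimal assignment and averaging over constraints yields $\val_\Psi \geq \val_\Psi(\vecsigma) \geq \cval_\Phi^q(\vecb) = \cval_\Phi^q$.

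First I would describe the ordering $\vecsigma$ in the natural way: sort the $n$ variables lexicographically so that variables assigned a smaller value of $\vecb$ appear before variables assigned a larger value, breaking ties within each block $\vecb^{-1}(\ell)$ arbitrarily (e.g.\ by index). Concretely, after fixing such a tie-breaking rule, set $\vecsigma(i)$ to be the rank of variable $i$ in this sort. The key property of $\vecsigma$ is monotonicity with respect to $\vecb$: for any two distinct indices $i, i' \in [n]$, if $b_i < b_{i'}$ then $\vecsigma(i) < \vecsigma(i')$.

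Next I would fix an arbitrary constraint $C = (f_\Pi^q, \vecj)$ with $\vecj = (j_0,\ldots,j_{k-1})$ that is satisfied by $\vecb$ in $\Phi$. By \autoref{defn:coarsening}, this means the entries of $\vecb|_\vecj = (b_{j_0},\ldots,b_{j_{k-1}})$ are pairwise distinct and $\Pi(\ord(\vecb|_\vecj)) = 1$. By the monotonicity of $\vecsigma$ above, the $k$-tuples $\vecb|_\vecj$ and $\vecsigma|_\vecj = (\vecsigma(j_0),\ldots,\vecsigma(j_{k-1}))$ are sorted in the same relative order, so $\ord(\vecsigma|_\vecj) = \ord(\vecb|_\vecj)$. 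Hence $\Pi(\ord(\vecsigma|_\vecj)) = \Pi(\ord(\vecb|_\vecj)) = 1$, i.e., $C$ is satisfied by $\vecsigma$ in $\Psi$.

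Finally, summing over all $m$ constraints gives $\val_\Psi(\vecsigma) \geq \cval_\Phi^q(\vecb)$, and maximizing over $\vecb$ on the right-hand side (with the left-hand side bounded above by $\val_\Psi$) concludes the proof. There is no real obstacle here: the main conceptual point is simply that distinctness of the $\vecb$-values on $\vecj$ is exactly what is needed to transfer satisfaction from the coarsening back to the ordering problem, and the rank-by-$\vecb$ construction of $\vecsigma$ preserves the induced order on any such tuple. This lemma is the OCSP analogue of the earlier $\mas$-specific observation in the introduction that $\val_G \geq \cval_G$, so I would keep the write-up short and explicit.
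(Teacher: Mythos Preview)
Your proposal is correct and follows essentially the same argument as the paper: construct $\vecsigma$ by ordering the blocks $\vecb^{-1}(0),\ldots,\vecb^{-1}(q-1)$ in order (with arbitrary tie-breaking within blocks), then use distinctness of the $\vecb$-values on any satisfied constraint to conclude $\ord(\vecsigma|_\vecj)=\ord(\vecb|_\vecj)$. The only cosmetic difference is that you phrase the construction as ``sort by $\vecb$-value and set $\vecsigma(i)$ to be the rank,'' whereas the paper phrases it as ``place the blocks in order''; these are the same ordering.
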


\begin{proof}
We will show that for every assignment $\vecb \in [q]^n$ to $\Psiq$, we can construct an assignment $\vecb^{\refn} \in \sym_n$ to $\Psi$ such that $\ocspval_\Psi(\vecb^{\refn}) \geq \cspval_{\Psiq}(\vecb)$. Consider an assignment $\vecb \in [q]^n$. Let $\vecb^{\refn}$ be the ordering on $[n]$ constructed by placing the blocks $\vecb^{-1}(1),\ldots,\vecb^{-1}(q)$ in order (within each block, we enumerate the indices arbitrarily). Consider any constraint $C = \vecj = (j_1,\ldots,j_k)$ in $\Psi$ which is satisfied by $\vecb$ in $\Psiq$. Since $\Piq(\vecb|_{\vecj}) = 1$, by definition of $\Piq$ we have that $\Pi(\ord(\vecb|_\vecj)) =1$ and in particular that $b_{j_1},\ldots,b_{j_k}$ are distinct. The latter implies, by the construction of $\vecb^{\refn}$, that $\ord(\vecb|_\vecj) = \ord(\vecb^{\refn}|_\vecj)$. Hence $\Pi(\ord(\vecb^{\refn}|_\vecj)) = 1$, so $\vecb^{\refn}$ satisfies $C$ in $\Psi$. Hence $\ocspval_\Psi(\vecb^{\refn}) \geq \cspval_{\Psiq}(\vecb)$.
\end{proof}

\subsection{$\cY$ has high $\mPi$ values}\label{sec:y-lower-bound}

In this section, we prove \ref{lemma:y-lower-bound}, which states that the $\mPi$ values of instances $\Psi$ drawn from $\DY$ are large. Note that we prove a bound for \emph{every} instance $\Psi$ in the support of $\DY$, although it would suffice for our application to prove that such a bound holds with high probability over the choice of $\Psi$.

To prove \ref{lemma:y-lower-bound}, by \ref{lemma:coarsening-monotonicity}, it suffices to show that $\cspval_{\Psiq} \geq 1-\frac{k-1}q$. One natural approach is to consider the $q$-partition $\vecb = (b_1,\ldots,b_n) \in [q]^n$ sampled when sampling $\Psi$ and view $\vecb$ as an assignment for $\Psiq$. Consider any constraint $C = \vecj = (j_1,\ldots,j_k)$ in $\Psi$; by the definition of $\cY^{\Pi,\vecpi}$ (\ref{def:YES_NO_MaxOCSP_dist}), we have $\vecb|_{\vecj} = \vecv^{(c)} +_q \vecpi$ for some (unique) $c \in [q]$, where $\vecv^{(c)}$ denotes the vector $(c,\ldots,c) \in [q]^k$. We term $c$ the \emph{identifier} of $C$. Now we use the following simple fact:

\begin{fact}\label{fact:shifted-order}
Let $\vecpi \in \sym_k$. Then for every $c \in \{1,\ldots,q-k\}\cup\{q\}$, $\ord(\vecv^{(c)}+_q\vecpi) = \vecpi$.
\end{fact}

\begin{proof}
Follows from the fact that for $c$ in this range, and every $i,j \in [k] \subset [q]$, we have $i<j$ iff $i+_qc < j+_qc$.
\end{proof}

Thus, $C$ is satisfied by $\vecb$ iff $\Pi(\ord(\vecv^{(c)}+_q \vecpi)) = 1$.  Hence a sufficient condition for $\vecb$ to satisfy $C$ is that $C$'s identifier $c \in \{1,\ldots,q-k\}\cup\{q\}$, since in this case by \ref{fact:shifted-order}, we have $\ord(\vecv^{(c)} +_q \vecpi) = \vecpi$. Unfortunately, when sampling the constraints, we might get ``unlucky'' and get a sample that over-represents the identifiers $ \{q-k+1,\ldots,q-1\}$. We can resolve this issue using ``shifted'' versions of $\vecb$.\footnote{Alternatively, in expectation, $\cspval_{\Psiq}(\vecb) = 1-\frac{k-1}q$. Hence with probability at least $\frac{99}{100}$, $\cspval_{\Psiq}(\vecb) \geq 1-\frac{100(k-1)}q$ by Markov's inequality; this suffices for a ``with-high-probability'' statement.} The proof is as follows:

\begin{proof}[Proof of \ref{lemma:y-lower-bound}]
For $t \in [q]$, define the assignment $\vecb^{(t)} = (b^{(t)}_1,\ldots,b^{(t)}_n)$ to $\Psiq$ via $b^{(t)}_i = b_i+_qt$ for $i \in [n]$.

Fix $t \in [q]$. Then we claim that $\vecb^{(t)}$ satisfies any constraint $C$ with identifier $c$ such that $c+_qt \in \{1,\ldots,q-k\}\cup\{q\}$. Indeed, if $C = \vecj$ is a constraint with identifier $c$, we have $\vecb^{(t)}|_{\vecj} = \vecv^{(c)}+_q\vecv^{(t)}+_q\vecpi=\vecv^{(c+_qt)} +_q \vecpi$, and then we use \ref{fact:shifted-order}.

Now (no longer fixing $t$), for each $c \in [q]$, let $w^{(c)}$ be the fraction of constraints in $\Psi$ with identifier $c$. By the previous paragraph, for each $t \in [q]$, we have $\cspval_\Psiq(\vecb^{(t)}) \geq \sum_{c:c+_qt \in \{1,\ldots,q-k\}\cup\{q\}} w^{(c)}$. On the other hand, $\sum_{c=1}^q w^{(c)} = 1$ (since every constraint has some (unique) identifier). Hence \[ \sum_{t=1}^q \cspval_{\Psiq}(\vecb^{(t)}) \geq \sum_{t=1}^q \left(\sum_{c : c+_qt \in \{1,\ldots,q-k\}\cup\{q\}} w^{(c)} \right) = q-(k-1), \] since each term $w^{(c)}$ appears exactly $q-(k-1)$ times in the expanded sum. Hence by averaging, there exists some $t \in [q]$ such that $\cspval_{\Psiq}(\vecb^{(t)}) \geq 1-\frac{k-1}q$, and so $\cspval_{\Psiq} \geq 1-\frac{k-1}q$, as desired.
\end{proof}

\subsection{$\cN$ has low $\mPi$ values}\label{sec:n-upper-bound}

In this section, we prove \ref{lemma:n-upper-bound}, which states that the $\mPi$ value of an instance drawn from $\cN$ does not significantly exceed the random ordering threshold $\rho(\Pi)$, with high probability.

\begin{remark}
    Using concentration bounds (i.e., \ref{lemma:azuma}), one could show that the probability that a fixed solution $\vecsigma \in \sym_n$ satisfies more than $\rho(\Pi) + 1/q$ constraints is exponentially small in $n$. However, taking a union bound over all $n!$ permutations $\vecsigma$ would cause an unacceptable blowup in the probability (since by Stirling's approximation, $n! \sim (n/e)^n$).
\end{remark}

Instead, to prove \ref{lemma:n-upper-bound}, we take an indirect approach, involving bounding the $\mcsp$ value of the $q$-coarsening of a random instance and bounding the gap between the $\mocsp$ value and the $q$-coarsened $\mcsp$ value. To do this, we define the following notions of small set expansion for $k$-hypergraphs:

\begin{definition}[Lying on a set]
Let $G = ([n],E)$ be a $k$-hypergraph. Given a set $S \subseteq [n]$, a $k$-hyperedge $\vecj \in E$ \emph{lies on} $S$ if it is incident on two (distinct) vertices in $S$ (i.e., if $|\Gamma(\vecj) \cap S| \geq 2$).
\end{definition}

\begin{definition}[Congregating on a partition]
Let $G = ([n],E)$ be a $k$-hypergraph. Given a $q$-partition $\vecb \in [q]^n$, a $k$-hyperedge $\vecj \in E$ \emph{congregates} on $\vecb$ if it lies on a block $\vecb^{-1}(i) = \{j \in [n]:b_j=i\}$ for some $i \in [q]$.
\end{definition}

\begin{definition}[Small set hypergraph expansion (SSHE) property]
A $k$-hypergraph $G = ([n],E)$ is a \emph{$(\gamma,\delta)$-small set hypergraph expander (SSHE)} if it has the following property: For every subset $S \subseteq [n]$ of size at most $\gamma n$, the number of $k$-hyperedges in $E$ which lie on $S$ is at most $\delta |E|$.
\end{definition}

\begin{definition}[Small partition hypergraph expansion (SPHE) property]\label{def:sphe}
A $k$-hypergraph $G = ([n],E)$ is a \emph{$(\gamma,\delta)$-small partition hypergraph expander (SPHE)} if it has the following property: For every partition $\vecb \in [q]^n$ where each block $\vecb^{-1}(i) = \{j \in [n]:b_j=i\}$ has size at most $\gamma n$, the number of $k$-hyperedges in $E$ that congregate on $\vecb$ is at most $\delta |E|$.
\end{definition}

In the context of \ref{fig:mas}, the SPHE property says that for \emph{any} partition with small blocks, there cannot be too many ``orange'' edges.

In the remainder of this subsection, we state several lemmas and then give a formal proof of \ref{lemma:n-upper-bound}. We begin with several short lemmas.

\begin{lemma}[Good SSHEs are good SPHEs]\label{lemma:sshe-to-sphe}
For every $k \in \N$ and $\gamma,\delta > 0$, if a $k$-hypergraph $G = (V,E)$ is a $(\gamma,\delta)$-SSHE, then it is a $(\gamma,\delta (2/\gamma+1))$-SPHE.
\end{lemma}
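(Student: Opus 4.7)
The plan is to fix an arbitrary $q$-partition $\vecb \in [q]^n$ all of whose blocks $B_0 = \vecb^{-1}(0),\ldots,B_{q-1} = \vecb^{-1}(q-1)$ have size at most $\gamma n$, and bound the number of $k$-hyperedges congregating on $\vecb$ by reducing to the SSHE hypothesis applied to a small number of ``super-blocks''. The key observation is that any edge $\vece$ that lies in some block $B_i$ (equivalently, congregates on $\vecb$) also lies on any set containing $B_i$; so if we can group the blocks into a small collection of sets $S_1,\ldots,S_t \subseteq V$ each of size at most $\gamma n$, then applying the SSHE property to each $S_j$ and summing yields the claim with bound $t \cdot \delta|E|$.

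The main step is therefore a ``bin-packing'' lemma: the blocks of $\vecb$ can be partitioned into at most $t \leq 2/\gamma + 1$ groups $S_1,\ldots,S_t$, each of total size at most $\gamma n$. The argument is to start with the finest such grouping (the blocks themselves) and greedily merge any two groups whose combined size is at most $\gamma n$, until no further merges are possible. In the resulting grouping, any two distinct groups $S_i, S_j$ satisfy $|S_i| + |S_j| > \gamma n$; in particular, at most one group has size $\leq \gamma n/2$. Thus at least $t-1$ groups have size $> \gamma n/2$, and since the total is $n$, we obtain $(t-1)\cdot \gamma n/2 < n$, i.e., $t < 2/\gamma + 1$.

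Given such a grouping, each edge $\vece$ that congregates on $\vecb$ lies in some block $B_i$, hence has at least two vertices in the super-block $S_j \supseteq B_i$, hence lies on $S_j$. Summing the SSHE bounds $N(G, S_j) \leq \delta|E|$ over the $t$ super-blocks yields at most $t \cdot \delta |E| \leq \delta(2/\gamma + 1)|E|$ congregating edges, as desired. Since $\vecb$ was arbitrary, $G$ is a $(\gamma, \delta(2/\gamma+1))$-SPHE.

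The proof is essentially routine once the reduction from partition-expansion to set-expansion via super-blocks is identified; the only quantitatively delicate step is the bin-packing bound, where the factor of $2$ (rather than $1$) in $2/\gamma + 1$ arises because greedy merging can leave one underfilled group, and this cannot be avoided in general.
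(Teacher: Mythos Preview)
Your proposal is correct and takes essentially the same approach as the paper's proof: both arguments greedily merge blocks into at most $\frac{2}{\gamma}+1$ ``super-blocks'' each of size at most $\gamma n$, then apply the SSHE property to each super-block and sum. Your bin-packing argument is in fact slightly more carefully stated than the paper's (which just says ``merge blocks until all but one block has size at least $\gamma n/2$''), but the underlying idea and the resulting bound are identical.
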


\begin{proof}
Let $n = |V|$. Consider any partition $\vecb \in [q]^n$ of $V$ where each block $\vecb^{-1}(i)$ has size at most $\gamma n$. WLOG, all but one block has size at least $\frac{\gamma n}2$ (if not, merge blocks until this happens, only increasing the number of $k$-hyperedges that congregate on $\vecb$). Hence $\ell \leq \frac2\gamma+1$.\footnote{We include the $+1$ to account for the extra block which may have arbitrarily small size. Excluding this block, there are at most $\frac{n}{\lceil \gamma n /2\rceil} \leq \frac{n}{\gamma n / 2}$ blocks remaining.} By the SSHE property, there are at most $\delta m$ $k$-hyperedges that lie on each block; hence there are at most $\delta (\frac2\gamma+1) m$ constraints that congregate on $\vecb$.
\end{proof}

\begin{lemma}[Coarsening roughly preserves value in SPHEs]\label{lemma:sphe-gap-bound}
Let $\Psi$ be a $\mPi$ instance on $n$ variables and let $q \geq 2/\gamma$. Suppose that the constraint hypergraph $G(\Psi)$ of $\Psi$ is a $(\gamma,\delta)$-SPHE. Then for sufficiently large $n$, \[ \ocspval_\Psi \leq \cspval_{\Psiq} + \delta. \]
\end{lemma}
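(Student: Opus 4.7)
The plan is to prove the contrapositive-style bound by starting from an optimal ordering $\vecsigma \in \sym_n$ for $\Psi$ (so $\val_\Psi = \val_\Psi(\vecsigma)$) and coarsening it into an assignment $\vecb \in [q]^n$ for $\Phi$ that loses only the constraints whose hyperedges "congregate" on $\vecb$. Then the SPHE hypothesis controls this loss.

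Concretely, I would define $\vecb = (b_0,\ldots,b_{n-1})$ by $b_v = \lfloor \vecsigma(v)\cdot q / n\rfloor$, so that $\vecb^{-1}(i)$ consists of the vertices placed by $\vecsigma$ in the $i$-th contiguous window of $[n]$. Each block then has size at most $\lceil n/q\rceil$. The hypothesis $q\ge 2/\gamma$ gives $\lceil n/q\rceil \le \gamma n/2 + 1$, which is $\le \gamma n$ for $n$ sufficiently large (e.g.\ as soon as $\gamma n \ge 2$). Thus $\vecb$ is a legal partition for the SPHE property, and the SPHE hypothesis on the constraint hypergraph $G(\Psi)$ yields that at most $\delta m(\Psi)$ constraints congregate on $\vecb$.

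The crux of the argument is the following observation, which I would state as a small sublemma: for any constraint $C = \vecj = (j_0,\dots,j_{k-1})$ in $\Psi$ whose hyperedge does \emph{not} congregate on $\vecb$, the entries $b_{j_0},\dots,b_{j_{k-1}}$ are pairwise distinct (since no two lie in the same block), and moreover the coarsening is monotone: since $b_{j_a} < b_{j_c}$ whenever $\vecsigma(j_a) < \vecsigma(j_c)$ (by construction of $\vecb$), we get $\ord(\vecb|_{\vecj}) = \ord(\vecsigma|_{\vecj})$. Consequently, if $\vecsigma$ satisfies $C$ in $\Psi$, then by \autoref{defn:coarsening}, $f^q_\Pi(\vecb|_{\vecj}) = \Pi(\ord(\vecb|_{\vecj})) = \Pi(\ord(\vecsigma|_{\vecj})) = 1$, so $\vecb$ satisfies $C$ in $\Phi$.

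Combining these, the number of constraints in $\Phi$ satisfied by $\vecb$ is at least the number satisfied by $\vecsigma$ in $\Psi$ minus the number that congregate on $\vecb$. Dividing by $m(\Psi) = m(\Phi)$ gives $\cval_\Phi^q(\vecb) \ge \val_\Psi(\vecsigma) - \delta = \val_\Psi - \delta$, and since $\cval_\Phi^q \ge \cval_\Phi^q(\vecb)$ the claim follows. I expect no real obstacle: the only subtlety is verifying that the block-size bound is $\le \gamma n$ (requiring the "sufficiently large $n$" and $q \ge 2/\gamma$ hypotheses), and that the coarsening genuinely preserves the order on non-congregating hyperedges, both of which follow immediately from the contiguous-window construction of $\vecb$.
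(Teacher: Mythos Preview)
Your proposal is correct and follows essentially the same approach as the paper: coarsen an optimal ordering $\vecsigma$ into a $[q]^n$-assignment by contiguous windows, observe that non-congregating constraints have distinct block labels and hence the same induced order, and bound the loss by the SPHE hypothesis. The only cosmetic difference is that the paper uses windows of width $\lfloor \gamma n\rfloor$ (setting $b_i=\lfloor \vecsigma(i)/\lfloor \gamma n\rfloor\rfloor$) rather than your $q$ equal windows, but both constructions yield blocks of size at most $\gamma n$ under the hypotheses and the rest of the argument is identical.
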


\begin{proof}
For every assignment $\vecsigma=(\sigma_1,\ldots,\sigma_n) \in \sym_n$ to $\Psi$, we will construct an assignment $\vecsigma^{\qcoar} = (\sigma^{\qcoar}_1,\ldots,\sigma^{\qcoar}_n) \in [q]^n$ to $\Psiq$ such that $\ocspval_\Psi(\vecsigma) \leq \cspval_{\Psiq}(\vecsigma^{\qcoar}) + \delta$. Fix $\vecsigma \in \sym_n$. Define $\vecsigma^{\qcoar} \in [q]^n$ by $\sigma^{\qcoar}_i = \lfloor \sigma_i / \gamma n \rfloor$ for each $i \in [n]$. We verify that since $\sigma_i \leq n$, we have \[ \sigma^{\qcoar}_i \leq \lfloor n / \gamma n \rfloor \leq n/\gamma n \leq n/(2n/q) \leq n/(n/q) = q, \] so 
$\vecsigma^{\qcoar}$ is a valid assignment to $\Psiq$. Also, $\vecsigma^{\qcoar}$ has the property that for every $i,j \in [n]$, if $\sigma_i < \sigma_j$ then $\sigma^{\qcoar}_i \leq \sigma^{\qcoar}_j$; we call this \emph{monotonicity} of $\vecsigma^{\qcoar}$.

Now view $\vecsigma^{\qcoar}$ as a $q$-partition and consider the constraint hypergraph $G(\Psi)$ of $\Psi$ (which is the same as the constraint hypergraph $G(\Psiq)$ of $\Psiq$). Call a constraint $C = \vecj = (j_1,\ldots,j_k)$ in $\Psi$ \emph{good} if it is both satisfied by $\vecsigma$, \emph{and} the $k$-hyperedge corresponding to it does not congregate on $\vecsigma^{\qcoar}$. If $C$ is good, then $\sigma^{\qcoar}_{j_1},\ldots,\sigma^{\qcoar}_{j_k}$ are all distinct; together with monotonicity of $\vecsigma^{\qcoar}$, we conclude that if $C$ is good, then $\ord(\vecsigma^{\qcoar}|_\vecj) = \ord(\vecsigma|_\vecj)$, and hence $C$ is also satisfied by $\vecsigma^{\qcoar}$ in $\Psiq$.

Finally, we note that each block in $\vecsigma^{\qcoar}$ has size at most $\gamma n$ by definition; hence by the SPHE property of the constraint hypergraph of $\Psi$, at most $\delta$-fraction of the constraints of $\Psi$ correspond to $k$-hyperedges that congregate on $\vecsigma^{\qcoar}$. Since $\ocspval_\Psi(\vecsigma)$-fraction of the constraints of $\Psi$ are satisfied by $\vecsigma$, at least $(\ocspval_\Psi(\vecsigma) - \delta)$-fraction of the constraints of $\Psi$ are good, and hence $\vecsigma^{\qcoar}$ satisfies at least $(\ocspval_\Psi(\vecsigma) - \delta)$-fraction of the constraints of $\Psiq$, as desired.
\end{proof}

The construction in this lemma was called \emph{coarsening} the assignment $\vecsigma$ by~\textcite{GHM+11} (cf.~\cite[Definition 4.1]{GHM+11}).

We also include the following helpful lemma, which lets us restrict to the case where our sampled $\mPi$ instance has many constraints.

\begin{lemma}[Most instances in $\cN$ have many constraints]\label{lemma:n-edge-count}
For every $n$, $\alpha,\gamma > 0$, and $q \in \N$, \[ \Pr_{\Psi \sim \DN}\left[m(\Psi) \leq \left(\frac{\alpha T}{2q^k}\right) n \right] \leq \exp\left(-\left(\frac{\alpha T}{8q^k} \right)n\right). \]
\end{lemma}

\begin{proof}
The number of constraints in $\Psi$ is distributed as the sum of $n\alpha T$ independent Bernoulli$(1/q^k)$ random variables. The desired bound follows by applying a Chernoff bound (\ref{lemma:chernoff}) with $\eta = p/2$.
\end{proof}

Now we state the following pair of lemmas, whose more involved proofs we defer to \ref{sec:n-sshe} and \ref{sec:n-coarse-lb}, respectively:

\begin{lemma}\label{lemma:n-sshe}
For every $n$, $\alpha, \gamma > 0$, and $q \in \N$ with $\alpha \leq 1/(2k)$,
\begin{multline*}
    \Pr_{\Psi \sim \DN}\left[G(\Psi)\text{ is not a }(\gamma,8k^2\gamma^2)\text{-SSHE} \bigbar m(\Psi) \geq \frac{n\alpha T}{2q^k} \right] \\
    \leq \exp\left(-{\left(\frac{\gamma^2 \alpha T}{2q^k}-\ln 2\right)}n\right).
\end{multline*}
\end{lemma}

\begin{lemma}[Satisfiability of random $\mcsp(\Psiq)$ instances]\label{lemma:n-coarse-lb}
For every $n$, $\alpha, \eta >0$,
\begin{multline*}
\Pr_{\Psi \sim \DN} \left[\cspval_{\Psiq} \geq \rho(\Pi)+\eta \bigbar m(\Psi) \geq \frac{n\alpha T}{2q^k} \right] \\
\leq \exp\left(-{\left(\frac{\eta^2\alpha T}{4(\rho(\Pi)+\eta) q^k}-\ln q\right)}n\right).
\end{multline*}
\end{lemma}

We remark here that the proofs of both lemmas only require union bounds over sets of size $(O_\epsilon(1))^n$ (the set of all small subsets of $[n]$ and of all solutions to the coarsened $\mcsp$, respectively); this lets us avoid the issue, mentioned in the Remark at the beginning of this subsection, of union-bounding over the entire space $\sym_n$ of super-exponential size $n!$ directly.

We finally give the proof of \ref{lemma:n-upper-bound}.

\begin{proof}[Proof of \ref{lemma:n-upper-bound}]
Let $q_0 \eqdef \left\lceil \frac{192k^2}{\epsilon} \right\rceil$ and let $\alpha_0 \eqdef \frac1{2k}$. Suppose $\alpha \leq \alpha_0$ and $q \geq q_0$. Then let $\gamma \eqdef \frac{\epsilon}{96k^2}$ and $\eta \eqdef \frac{\epsilon}4$, and let \[ T_0 \eqdef \left\lceil \max\left\{\frac{2(\ln 2)q^k}{\gamma^2 \alpha},\frac{4(\rho(\Pi)+\eta)q^k (\ln q)}{\eta^2 \alpha}\right\} \right\rceil + 1. \]
Now, we will prove the desired bound for any $T \geq T_0$.

Let $\cE_1,\cE_2,$ and $\cE_3$ denote, respectively, the events ``$m(\Psi) \leq (\alpha T/(2q^k)) n$'', ``$G(\Psi)$ is not a $(\gamma,8k^2\gamma^2)$-SSHE'', and ``$\cspval_{\Psi^\qcoar} \geq \rho(\Pi) + \eta$''. Then, since $\alpha \leq \alpha_0$, \ref{lemma:n-edge-count}, \ref{lemma:n-sshe}, and \ref{lemma:n-coarse-lb} state that
\begin{align*}
    \Pr[\cE_1] &\leq \exp(-C_1 n), \\
    \Pr[\cE_2 \mid \overline{\cE_1}] &\leq \exp(-C_2 n),\\
    \Pr[\cE_3 \mid \overline{\cE_1}] &\leq \exp(-C_3 n),
\end{align*}
respectively, where we define the constants $C_1 = \frac{\alpha T}{8q^k}$, $C_2 = \frac{\gamma^2 \alpha T}{2q^k} - \ln2$, and $C_3 = \frac{\eta^2 \alpha T}{4(\rho(\Pi) + \eta)q^k} - \ln q$, and all probabilities are over the choice of $\Psi \sim \DN$. Now observe that for our choice of $T$, $C_1,C_2,C_3$ are all positive and do not depend on $n$, so for sufficiently large $n$, all three probabilities are smaller than $1/1000$. This implies that $\Pr[\cE_2 \vee \cE_3] \leq 1/100$: Indeed,
\begin{align*}
    \Pr[\cE_2 \vee \cE_3] &= \Pr[\cE_2 \vee \cE_3 \mid \overline{\cE_1}] \Pr[\overline{\cE_1}] + \Pr[\cE_2 \vee \cE_3 \mid \cE_1] \Pr[\cE_1] \tag{total probability} \\
    &\leq \Pr[\cE_2 \vee \cE_3 \mid \overline{\cE_1}] + \Pr[\cE_1] \tag{probabilities $\leq 1$} \\
    &\leq \Pr[\cE_2 \mid \overline{\cE_1}] + \Pr[\cE_3 \mid \overline{\cE_1}] + \Pr[\cE_1] \tag{union bound}
\end{align*}
and this is $\leq 3/1000 < 1/100$ by assumption.

Finally, we show that when neither $\cE_2$ nor $\cE_3$ occurs, we have $\ocspval_\Psi \leq \rho(\Pi) + \epsilon/2$. Indeed, if $G(\Psi)$ is a $(\gamma,8k^2\gamma^2)$-SSHE, by \ref{lemma:sshe-to-sphe} it is also a $(\gamma,\delta)$-SPHE for $\delta = 8k^2 \gamma^2 (2/\gamma+1) \leq 8k^2 \gamma^2 (3/\gamma) = 24k^2 \gamma = \epsilon/4$. Now since $q \geq q_0 \geq 2/\gamma$, we can apply \ref{lemma:sphe-gap-bound} and conclude that \[ \ocspval_\Psi \leq \rho(\Pi) + \eta + \delta \leq \rho(\Pi) + \frac{\epsilon}2, \] as desired.
\end{proof}

\subsubsection{$\cN$ is a good SSHE with high probability: Proving \ref{lemma:n-sshe}}\label{sec:n-sshe}

Recall that for a $k$-hypergraph $G = (V,E)$ and $S\subseteq V(G)$, we define $\Gamma(\vecj) \subseteq V$ as the set of vertices incident on $\vecj$.

\begin{proof}[Proof of \ref{lemma:n-sshe}]
Let $\tilde{G}_t$ denote the $t$-th hypermatching sampled when sampling $\Psi$ (as in \ref{def:YES_NO_MaxOCSP_dist}). Let $\beta = \frac{m(\Psi)}{Tn}$, and for each $t \in [T]$, let $\beta_t = \frac{m(G_t)}n$, so that $\beta = \frac1T\sum_{t=1}^T \beta_t$. By our conditioning assumption, $\beta \geq \frac{\alpha}{2q^k}$, and $\beta_t \leq \alpha$ for each $t \in [T]$. It suffices to prove the lemma conditioned on fixed values for $\beta_1,\ldots,\beta_T$. This is equivalent to simply sampling hypermatchings $G_t \sim \matchings_{k,\beta_t}(n)$ independently and including all of their $k$-hyperedges as constraints.

Fix any set $S \subseteq [n]$ of size at most $\gamma n$. For each $t \in [T]$, label the $k$-hyperedges of $G_t$ as $\vecj(t,1),\ldots,\vecj(t,\beta_t n)$. Consider the collection of $m(\Psi) = \beta T n$ random variables $\{X_{t,i}\}_{t\in[T],i\in[\beta_tn]}$, each of which is the indicator for the event that $\vecj(t,i)$ lies on $S$. Define $X = \sum_{t=1}^T\sum_{i=1}^{\beta_tn} X_{t,i}$ as the number of such ``lying on'' edges we observe. Since $m(G(\Psi)) = \beta T n$, it suffices to bound the probability that $X \geq 8k^2\gamma^2 m$, and then take the union bound over all subsets $S$.

For fixed $t \in [T]$, we first bound $\Exp[X_{t,i} \mid X_{t,1},\ldots,X_{t,i-1}]$ for each $i\in[\beta_tn]$. Conditioned on $\vecj_{t,1},\ldots,\vecj_{t,i-1}$, the $k$-hyperedge $\vecj(t,i)$ is uniformly distributed over the set of all $k$-hyperedges on $[n] \setminus (\Gamma(\vecj_{t,1}) \cup \cdots \cup \Gamma(\vecj_{t,i-1}))$. It suffices to union-bound, over distinct pairs $\{a,b\} \in \binom{[k]}2$, the probability that the $a$-th and $b$-th vertices of $\vecj(t,i)$ are in $S$ (conditioned on $X_{t,1},\ldots,X_{t,i-1}$). We can sample the $a$-th and $b$-th vertices of $\vecj(t,i)$ first (uniformly over the remaining vertices) and then ignore the remaining vertices. Hence we have the upper bound 

\begin{align*}
    \Exp[X_{t,i} \mid X_{t,1},\ldots,X_{t,i-1}] & \leq \binom{k}2 \cdot \frac{|S|(|S|-1)}{(n-k(i-1))(n-k(i-1)-1)}\\
    &\leq \binom{k}2 \cdot \left(\frac{|S|}{n-k(i-1)}\right)^2 \\
    &\leq \binom{k}2 \cdot \left(\frac{|S|}{n-k\beta_t n}\right)^2 \leq 4k^2\gamma^2 \, , 
\end{align*} since $\beta_t \leq \alpha \leq \frac1{2k}$.

Now, we want to apply \ref{lemma:azuma} to the random variables $(X_{t,i})$. Consider enumerating the $X_{t,i}$'s lexicographically as \[ X_{1,1},\ldots,X_{1,\beta_1 n},X_{2,1},\ldots,X_{2,\beta_2n},\ldots,X_{T,1},\ldots,X_{T,\beta_Tn}. \] Now since the hypermatchings $G_t$ are sampled independently for all $t \in [T]$, the collections of random variables $(X_{t,i})_{i \in [\beta_t n]}$ are independent across $t \in [T]$. So, for all $t \in [T]$ and $i \in [\beta_t n]$, the expectation of $X_{t,i}$ conditioned on the lexicographically-earlier $X_{t',i'}$'s is still at most $4k^2\gamma^2$. Thus \ref{lemma:azuma} (at $p = \eta = 4k^2 \gamma^2$) implies that \[ \Pr_{\Psi \sim \DN}\left[X \geq 8 k^2 \gamma^2 m \right] \leq \exp\left(-k^2 \gamma^2 m \right) \leq \exp(-\gamma^2 m). \] Finally, we assumed $m \geq \frac{\alpha T}{2q^k} n$, combined with the union-bound over the $\leq 2^n$ possible subsets $S \subseteq [n]$, gives the desired bound.
\end{proof}

\subsubsection{$\cN$ has low coarsened $\mcsp(\Psiq)$ values (w.h.p.): Proving \ref{lemma:n-coarse-lb}}\label{sec:n-coarse-lb}

\begin{proof}[Proof of \ref{lemma:n-coarse-lb}]
We consider the same setup as in the first paragraph of the proof of \ref{lemma:n-sshe} in the previous subsection: We set $\beta = \frac{m(\Psi)}{Tn}$ and $\beta_t = \frac{m(G_t)}{n}$. Abbreviating $m=m(\Psi)$, $\Psi$ contains $m = \beta T n$ constraints. We condition on fixed $\beta_1,\ldots,\beta_T$, and consider sampling hypermatchings $G_t \sim \matchings_{k,\beta_t}(n)$ and we have $\frac1T \sum_{t=1}^T \beta_t = \beta \geq \alpha/2q^k$. We label the $k$-hyperedges of $G_t$ as $\vecj(t,1),\ldots,\vecj(t,\beta_t n)$. Now, we view these $k$-hyperedges as the constraints of $\Psiq$ for $\Psi \sim \DN$. 

Let $X_{t,i}$ for $t \in [T],i \in [\beta_t n]$ be the indicator for the event that, as a constraint in $\Psiq$, $\vecj(t,i)$ is satisfied by $\vecb$, i.e., $\Piq(\vecb|_{\vecj(t,i)}) = 1$. Again, like in the proof of \ref{lemma:n-sshe}, we first fix $t \in [T]$ and bound $\Exp[X_{t,i} \mid X_{t,1},\ldots,X_{t,i-1}]$, for each $i \in [\beta_t n]$. Conditioned on $\vecj(t,1),\ldots,\vecj(t,i-1)$, the $k$-hyperedge $\vecj(t,i)$ is uniformly distributed over the set of all $k$-hyperedges on $[n] \setminus (\Gamma(\vecj(t,1)) \cup \cdots \cup \Gamma(\vecj(t,i-1)))$. Now, we claim that $\Exp[X_{t,i} \mid X_{t,1},\ldots,X_{t,i-1}] \leq \rho(\Pi)$. Indeed, the set of possible $k$-hyperedges on $[n] \setminus (\Gamma(\vecj(t,1)) \cup \cdots \cup \Gamma(\vecj(t,i-1)))$ may be partitioned into blocks of size $k!$ by mapping each $k$-hyperedge to the set of vertices on which it is incident. That is, we can consider the $k!$ possible $k$-hyperedges resulting from permuting the vertices of a given $k$-hyperedge $\vecj = (j_1,\ldots,j_k)$. If $b_{j_1},\ldots,b_{j_k}$ are not all distinct, then $\ord(\vecb|_\vecj) = \bot$ and so none of the $k$-hyperedges are satisfied by $\vecb$ as constraints in $\Psiq$; otherwise, $\ord(\vecb|_\vecj)$ is some ordering in $\sym_k$, and so exactly $|\supp(\Pi)| = \rho(\Pi) \cdot k!$ permutations of $\vecj$ are satisfied as constraints in $\Psiq$.

Now, we again apply \ref{lemma:azuma} to the random variables $(X_{t,i})$. Using the same lexicographic enumeration of variables as in the proof of \ref{lemma:n-sshe}, and the same observation that the hypermatchings $G_t$ are sampled independently, we conclude that the expectation of $X_{t,i}$ conditioned on the lexicographically-earlier $X_{t',i'}$'s is at most $\rho(\Pi)$. Again, we apply \ref{lemma:azuma} (now with $p=\rho(\Pi)$) to deduce that \[ \Pr\left[\sum_{t=1}^T \sum_{i=1}^{\beta_t n} X_i \geq (\rho(\Pi) + \eta) m\right] \leq \exp\left(- \left(\frac{\eta^2}{2(\rho(\Pi) + \eta)}\right) m \right). \] Finally, we assumed $m \geq n\alpha T/2q^k$, so the RHS is at most \[ \exp\left(- \left(\frac{\eta^2 \alpha T}{4(\rho(\Pi) + \eta)}\right) n \right); \] then, we union bound over $\vecb \in [q]^n$ to yield the desired conclusion.

\end{proof}

\section{Streaming indistinguishability of $\cY$ and $\cN$}\label{sec:streaming}

In this section, we prove \ref{lem:our-indist} establishing the streaming indistinguishability of the distributions $\cY$ and $\cN$. This indistinguishability follows directly from the work of \textcite{CGS+22-linear-space}, who introduce a $T$-player communication problem called \emph{implicit randomized mask detection (IRMD)}. Once we properly situate our instances $\cY$ and $\cN$ within the framework of \cite{CGS+22-linear-space}, \ref{lem:our-indist} follows immediately.

We first recall their definition of the IRMD problem and state their lower bound. The following definition is based on \cite[Definition 3.1]{CGS+22-linear-space}. In \cite{CGS+22-linear-space}, the IRMD game is parametrized by two distributions $\cD_Y$ and $\cD_N$, but hardness is proved for a specific pair of distributions which suffices for our purpose; these distributions will thus be ``hardcoded'' into the definition we give.

\begin{definition}[Implicit randomized mask detection problem\label{def:irmd}]
Let $q,k,n,T \in \mathbb{N},\alpha \in (0,1/k)$ be parameters. $\IRMD_{q,k,\alpha,T}(n)$ is a $T$-player one-way communication game. The $T$ players are denoted $\Player_1,\ldots,\Player_T$. The input to the players is drawn either from a ``$\yes$ distribution'' or a ``$\no$ distribution'', and their collective goal is to distinguish these two cases. Each player $\Player_t$ receives two inputs: (i) a uniform $k$-hypermatching $\tilde{G}_t \sim \matchings_{k,\alpha}(n)$ on $n$ vertices with $\alpha n$ hyperedges $\vecj(t,1),\ldots,\vecj(t,\alpha n)$, and (ii) a vector $\vecz_t = (\vecz_{t,1},\ldots,\vecz_{t,\alpha n}) \in ([q]^k)^{\alpha n}$ labeling the $k$-hyperedges of $\tilde{G}_t$. There is also a random partition $\vecb \sim \unif([q]^n)$ which is hidden, i.e., not given to the players directly. The difference between the $\yes$ and $\no$ distributions is in how $\vecz_t$ is determined by $\tilde{G}_t$ and $\vecb$:

\begin{itemize}
    \item $\yes$ case: Each $\vecz_{t,i} = \vecb|_{\vecj(t,i)} +_q \vecy_{t,i}$ where $\vecy_{t,i} \sim \unif(\{\vecv^{(c)}=(c,\ldots,c) : c \in [q]\})$ independently.
    \item $\no$ case: Each $\vecz_{t,i} = \vecb|_{\vecj(t,i)} +_q \vecy_{t,i}$ where $\vecy_{t,i} \sim \unif([q]^k)$ independently.
\end{itemize}

$\Player_t$ sends a private message to the $\Player_{t+1}$ after receiving a message from $\Player_{t-1}$. The goal is for $\Player_T$ to decide whether $(\vecz_t)_{t \in T}$ has been chosen from the $\yes$ or $\no$ distribution, and the advantage of a protocol is defined as \[ \left\lvert\Pr_{\yes\text{ case}}[\Player_T\text{ outputs 1}]-\Pr_{\no\text{ case}}[\Player_T\text{ outputs 1}]\right\rvert. \]
\end{definition}

Note that the definition of the IRMD problem does not depend on an underlying predicate (beyond fixing $q$ and $k$). Nevertheless, we will be able to leverage IRMD's hardness to prove \ref{lem:our-indist} (and indeed, all hardness results in \cite{CGS+22-linear-space} itself stem from hardness for the IRMD problem). The following theorem of \cite{CGS+22-linear-space} gives a lower bound on the communication complexity of the IRMD problem:

\begin{theorem}[{\cite[Theorem~3.2]{CGS+22-linear-space}}]\label{thm:distinguishing_distributions}
For every $q,k \in \mathbb{N}$ and $\delta \in (0,1/2)$, $\alpha \in (0,1/k)$, $T \in \mathbb{N}$ there exists $n_0 \in \mathbb{N}$ and $\tau \in (0,1)$ such that the following holds. For all $n \geq n_0$, every protocol for $\IRMD_{q,k,\alpha,T}(n)$ with advantage $\delta$ requires $\tau n$ bits of communication.
\end{theorem}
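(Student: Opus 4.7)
The plan is to establish the IRMD communication lower bound via a Fourier-analytic hybrid argument across the $T$ rounds, following the template of Gavinsky--Kempe--Kerenidis--Raz--de~Wolf (for the Boolean Hidden Matching problem) and its $q$-ary/hypermatching generalizations (e.g.\ Verbin--Yu, Kapralov--Krachun). First, by Yao's minimax and convexity, I would reduce to showing the bound for a deterministic protocol against the uniform mixture of YES and NO inputs, with hidden partition $\vecb \sim \unif([q]^n)$.

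Next, I would set up the hybrid. For each $t \in [T]$, let $\Pi_t$ be the transcript of the first $t$ messages, and let $\chi^2_t$ denote the chi-squared divergence between the YES and NO joint distributions of $(\Pi_t, M_0, \vecz_0, \ldots, M_{t-1}, \vecz_{t-1})$. The goal is to show $\chi^2_T = o(1)$ whenever every message has length at most $\tau n$; converting to total variation via Pinsker (or the direct $\chi^2$-to-TV inequality) then yields advantage below $\delta$.

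The heart of the argument is the single-round Fourier analysis. Fix an arbitrary prefix transcript; the $t$-th player sees $(M_t, \vecz_t)$, where $M_t$ is a random $\alpha$-partial $k$-hypermatching and $\vecz_t = M_t \vecb + \vecy_t$. In the YES case the ``mask'' $\vecy_t$ is supported on the ``rank-one-per-hyperedge'' subspace of $\bZ_q^{\alpha k n}$, while in the NO case it is uniform. Expanding $p_Y/p_N$ in the Fourier basis of $\bZ_q^{\alpha k n}$, only characters that are \emph{constant within every hyperedge} of $M_t$ survive; these pull back to characters $\chi_{\vecs}$ of $\bZ_q^n$ whose support is ``$M_t$-covered'', i.e.\ every $i \in \supp(\vecs)$ lies in some hyperedge of $M_t$ on which $\vecs$ is constant. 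The key combinatorial-Fourier lemma bounds, in expectation over a random $M_t \sim \cH_{k,n,\alpha}$, the probability that a fixed nontrivial $\chi_{\vecs}$ is $M_t$-covered; for $\alpha \le \alpha_0(k)$ small enough this decays exponentially in $|\supp(\vecs)|$. Combined with Parseval, this yields a per-round chi-squared contribution of at most $e^{-\Omega(n)}$, inflated by at most a factor $2^{O(s)}$ when conditioning on an $s$-bit transcript (via the standard ``rectangle'' pigeonhole argument for deterministic protocols).

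Iterating over all $T$ rounds via the chain rule for $\chi^2$, the total divergence is $\chi^2_T \le T \cdot 2^{O(\tau n)} \cdot e^{-\Omega(n)}$. Choosing $\tau = \tau(q, k, \alpha, T, \delta)$ sufficiently small makes this $o(1)$ for $n \ge n_0$, giving the desired bound. The main obstacle will be the combinatorial-Fourier estimate on $M_t$-covered characters: one must carefully control the probability, over the uniform distribution on $\alpha$-partial $k$-hypermatchings, that both (i) every coordinate in $\supp(\vecs)$ is matched and (ii) the character $\chi_{\vecs}$ is constant on each hyperedge. Tracking these constraints simultaneously, and ensuring the exponential decay survives composition with transcript-conditioning across $T$ rounds, is the technical heart of the proof and where the $\Omega(n)$ bound originates.
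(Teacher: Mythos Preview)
This theorem is not proved in the paper; it is quoted verbatim from \cite{CGS+21} and used as a black box in the proof of \autoref{lem:our-indist}. There is therefore no ``paper's own proof'' to compare your proposal against. Your sketch is a reasonable high-level description of how results of this type are obtained (and is broadly in the spirit of the arguments in \cite{KK19,CGS+21}), but for the purposes of this paper no proof is required beyond the citation.
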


Now, we use this hardness result to prove \ref{lem:our-indist}. The following proof is based on the proof of \cite[Theorem 4.3]{CGS+22-linear-space}. However, \cite[Theorem 4.3]{CGS+22-linear-space} as written contains both the streaming-to-communication reduction and an analysis of the CSP values of $\yes$ and $\no$ instances; in the following, we reprove only the former (and adapt the language to our setting).

\begin{proof}[Proof of \ref{lem:our-indist}]
We prove the lemma for the same $\alpha_0$ as in \ref{thm:distinguishing_distributions}.

Suppose $\Alg$ is a $s(n)$-space streaming algorithm that distinguishes $\DY$ from $\DN$ with advantage $1/8$ for all lengths $n$. We now show how to use $\Alg$ to construct a protocol $\Prot = (\Prot_1,\ldots,\Prot_T)$ (where $\Prot_t$ defines the behavior of the $\Player_t$) solving $\IRMD_{q,k,\alpha,T}(n)$ with advantage $1/8$ for $n \geq n_0$, which uses only $s(n)$ bits of communication; \ref{thm:distinguishing_distributions} provides a constant $\tau \in (0,1)$ yielding the desired contradiction if we set $s(n) \leq \tau n$. As is standard, this reduction will involve the players collectively running the streaming algorithm $\Alg$. That is, $\Prot$ is defined as follows: For $t \in [T-2]$, $\Prot_t$ adds some constraints to the stream (in a manner to be specified below) and then sends the state of $\Alg$ on to $\Player_{t+1}$. Finally, $\Prot_T$ terminates the streaming algorithm and outputs the output of $\Alg$.

Recall, $\Player_t$'s input is a pair $(\tilde{G}_t, \vecz_t)$ consisting of a hypermatching $\tilde{G}_t$ and a vector $\vecz_t = (\vecz_{t,1},\ldots,\vecz_{t,\alpha n})$ of labels $\vecz_{t,i} \in [q]^k$ for each hyperedge in $M_t$. We define $\Prot_t$'s behavior as follows: It adds each hyperedge $\vecj(i)$ in $\tilde{G}_t$ to the stream iff $\vecz_{t,i} = \vecpi$.

Let $\cY'(n)$ and $\cN'(n)$ denote the distributions of $\mPi$ instances constructed by $\Prot$ in the $\yes$ and $\no$ cases, respectively. The crucial claim is that $\cY'(n)$ and $\DY$ are identical distributions, and similarly with $\cN'(n)$ and $\DN$. This claim suffices to prove the lemma since the constructed stream of constraints is fed into $\Alg$, which is an $s(n)$-space streaming algorithm distinguishing $\DY$ from $\DN$; hence we can conclude that $\Prot$ is a protocol for $\IRMD$ using $s(n) \leq \tau n$ bits of communication.

It remains to prove the claim. We first consider the $\no$ case. $\cN'(n)$ and $\DN$ are both sampled by independently sampling $T$ hypermatchings from $\matchings_{k,\alpha}(n)$ and then (independently) selecting some subset of $k$-hyperedges from each hypermatching to add as constraints. It suffices by independence to prove equivalence of how the subset of each hypermatching is sampled in each case. For each $t \in [T]$, $\Prot_t$ adds a hyperedge $\vecj(t,i)$ iff $\vecz_{t,i} = \vecpi$. But in the $\no$ case (even conditioned on all other $\vecz_{t,i'}$'s, on the hidden partition $\vecb$, and on $\vecj(t,i)$ itself), $\vecz_{t,i}$ is a uniform value in $[q]^k$, and hence $\vecj(t,i)$ is added to the instance with probability $1/q^k$. This is exactly how we defined $\DN$ to sample constraints.

Similarly, in the $\yes$ case, we consider the sampled $q$-partition $\vecb = (b_1,\ldots,b_n) \in [q]^n$ and a hyperedge $\vecj(t,i) = (j_1,\ldots,j_k)$. In this case, by the definition of $\IRMD$, we have $\vecz_{t,i} = \vecb|_{\vecj(t,i)} +_q \vecv^{(c)}$ where $c \sim [q]$ is uniform and $\vecv^{(c)} = (c,\ldots,c)$. Hence $\Prot_t$ will add $\vecj(t,i)$ iff $\vecb|_{\vecj(t,i)} = \vecpi +_q \vecv^{(c')}$ where $c' \in [q]$ is the unique value such that $c' +_q c = q$. Consider the event $\cE_1$ that there exists any $c' \in [q]$ such that $\vecb|_{\vecj(t,i)} = \vecpi +_q \vecv^{(c')}$ and the event $\cE_2$ that $c+_qc'=q$. Note that if $\cE_1$ does not occur, then $\vecj(t,i)$ is never added, while if $\cE_1$ does occur, with probability $1/q$ over the choice of $c$, $\cE_2$ occurs and hence $\vecj(t,i)$ is added. (Again, this holds even conditioned on all other $\vecz_{t,i'}$'s, on $\vecj(t,i)$, and on $\vecb$.) This is exactly how we defined $\DY$ to sample constraints.
\end{proof}

\printbibliography

@article{AMW15,
  title = {On the {{NP-hardness}} of Approximating Ordering-Constraint Satisfaction Problems},
  author = {Austrin, Per and Manokaran, Rajsekar and Wenner, Cenny},
  date = {2015},
  journaltitle = {Theory of Computing},
  shortjournal = {Theory Comput.},
  volume = {11},
  pages = {257--283},
  doi = {10.4086/toc.2015.v011a010},
  note = {Conference version in APPROX 2013}
}

@inproceedings{BK19,
  title = {{{UG-Hardness}} to {{NP-Hardness}} by {{Losing Half}}},
  booktitle = {34th {{Computational Complexity Conference}}},
  author = {Bhangale, Amey and Khot, Subhash},
  date = {2019},
  series = {{{LIPIcs}}},
  volume = {137},
  publisher = {Schloss Dagstuhl — Leibniz-Zentrum für Informatik},
  doi = {10.4230/LIPIcs.CCC.2019.3},
  eventdate = {2019-08-18/2019-08-20},
  eventtitle = {{{CCC}} 2019},
  venue = {New Brunswick, NJ, USA}
}

@inproceedings{CGMV20,
  title = {Vertex Ordering Problems in Directed Graph Streams},
  booktitle = {Proceedings of the 31st {{Annual ACM-SIAM Symposium}} on {{Discrete Algorithms}}},
  author = {Chakrabarti, Amit and Ghosh, Prantar and McGregor, Andrew and Vorotnikova, Sofya},
  date = {2020-01},
  pages = {1786--1802},
  publisher = {{Society for Industrial and Applied Mathematics}},
  doi = {10.5555/3381089.3381198},
  eventdate = {2020-01-05/2020-01-09},
  eventtitle = {{{SODA}} 2020},
  venue = {Salt Lake City, UT, USA}
}

@inproceedings{CGS+22-linear-space,
  title = {Linear {{Space Streaming Lower Bounds}} for {{Approximating CSPs}}},
  booktitle = {Proceedings of the 54th {{Annual ACM Symposium}} on {{Theory}} of {{Computing}}},
  author = {Chou, Chi-Ning and Golovnev, Alexander and Sudan, Madhu and Velingker, Ameya and Velusamy, Santhoshini},
  date = {2022},
  eprint = {2106.13078v1},
  eprinttype = {arXiv},
  eprintclass = {cs.CC},
  publisher = {Association for Computing Machinery},
  doi = {10.1145/3519935.3519983},
  eventdate = {2022-06-20/2022-06-24},
  eventtitle = {{{STOC}} 2022},
  venue = {Rome, Italy}
}

@inproceedings{CGS+22-monarchy,
  title = {Sketching {{Approximability}} of ({{Weak}}) {{Monarchy Predicates}}},
  booktitle = {Approximation, {{Randomization}}, and {{Combinatorial Optimization}}. {{Algorithms}} and {{Techniques}}},
  author = {Chou, Chi-Ning and Golovnev, Alexander and Shahrasbi, Amirbehshad and Sudan, Madhu and Velusamy, Santhoshini},
  editor = {Chakrabarti, Amit and Swamy, Chaitanya},
  date = {2022},
  series = {{{LIPIcs}}},
  volume = {245},
  pages = {35:1-35:17},
  publisher = {Schloss Dagstuhl — Leibniz-Zentrum für Informatik},
  doi = {10.4230/LIPIcs.APPROX/RANDOM.2022.35},
  eventdate = {2022-09-19/2022-09-21},
  eventtitle = {{{APPROX}} 2022},
  langid = {english},
  venue = {virtual}
}

@unpublished{CGSV21-boolean,
  title = {Approximability of All {{Boolean CSPs}} with Linear Sketches},
  author = {Chou, Chi-Ning and Golovnev, Alexander and Sudan, Madhu and Velusamy, Santhoshini},
  date = {2021-02},
  eprint = {2102.12351v7},
  eprinttype = {arXiv},
  eprintclass = {cs.CC}
}

@article{CGSV24,
  title = {Sketching {{Approximability}} of {{All Finite CSPs}}},
  author = {Chou, Chi-Ning and Golovnev, Alexander and Sudan, Madhu and Velusamy, Santhoshini},
  date = {2024-04-12},
  journaltitle = {Journal of the ACM},
  shortjournal = {J. ACM},
  volume = {71},
  number = {2},
  pages = {15:1-15:74},
  doi = {10.1145/3649435},
  note = {Conference version in FOCS 2021}
}

@inproceedings{CGV20,
  booktitle = {2020 {{IEEE}} 61st {{Annual Symposium}} on {{Foundations}} of {{Computer Science}}},
  author = {Chou, Chi-Ning and Golovnev, Alexander and Velusamy, Santhoshini},
  date = {2020-11},
  pages = {330--341},
  publisher = {IEEE Computer Society},
  doi = {10.1109/FOCS46700.2020.00039},
  eventdate = {2020-11-16/2020-11-19},
  eventtitle = {{{FOCS}} 2020},
  venue = {virtual},
  title = {Optimal {{Streaming Approximations}} for All {{Boolean Max-2CSPs}} and {{Max-}}{\(k\)}{{SAT}}}
}

@inproceedings{CM23,
  title = {Triplet {{Reconstruction}} and All Other {{Phylogenetic CSPs}} Are {{Approximation Resistant}}},
  booktitle = {{{IEEE}} 64th {{Annual Symposium}} on {{Foundations}} of {{Computer Science}}},
  author = {Chatziafratis, Vaggos and Makarychev, Konstantin},
  date = {2023-11-01},
  pages = {253--284},
  publisher = {IEEE Computer Society},
  doi = {10.1109/FOCS57990.2023.00024},
  eventdate = {2023-11-06/2023-11-09},
  eventtitle = {{{FOCS}} 2023},
  venue = {Santa Cruz, CA, USA}
}

@article{CS98,
  title = {A Geometric Approach to Betweenness},
  author = {Chor, Benny and Sudan, Madhu},
  date = {1998},
  journaltitle = {SIAM Journal on Discrete Mathematics},
  shortjournal = {SIAM J. Discrete Math.},
  volume = {11},
  number = {4},
  pages = {511--523},
  publisher = {{Society for Industrial and Applied Mathematics}},
  doi = {10.1137/S0895480195296221},
  note = {Conference version in Algorithms, ESA 1995}
}

@article{GHM+11,
  title = {Beating the {{Random Ordering}} Is {{Hard}}: {{Every}} Ordering {{CSP}} Is Approximation Resistant},
  author = {Guruswami, Venkatesan and Håstad, Johan and Manokaran, Rajsekar and Raghavendra, Prasad and Charikar, Moses},
  date = {2011},
  journaltitle = {SIAM Journal on Computing},
  shortjournal = {SIAM J. Comput.},
  volume = {40},
  number = {3},
  pages = {878--914},
  publisher = {{Society for Industrial and Applied Mathematics}},
  doi = {10.1137/090756144},
  note = {Conference version in FOCS 2008}
}

@article{GKK+08,
  title = {Exponential Separation for One-Way Quantum Communication Complexity, with Applications to Cryptography},
  author = {Gavinsky, Dmitry and Kempe, Julia and Kerenidis, Iordanis and Raz, Ran and family=Wolf, given=Ronald, prefix=de, useprefix=true},
  date = {2008-12},
  journaltitle = {SIAM Journal on Computing},
  shortjournal = {SIAM J. Comput.},
  volume = {38},
  number = {5},
  pages = {1695--1708},
  publisher = {{Society for Industrial and Applied Mathematics}},
  issn = {0097-5397},
  doi = {10.1137/070706550},
  keywords = {bounded-storage model,communication complexity,exponential separation,extractor,hidden matching problem,one-way communication,quantum communication,quantum cryptography,streaming model},
  note = {Conference version in STOC 2007}
}

@inproceedings{GT19,
  title = {Streaming {{Hardness}} of {{Unique Games}}},
  booktitle = {Approximation, {{Randomization}}, and {{Combinatorial Optimization}}. {{Algorithms}} and {{Techniques}}},
  author = {Guruswami, Venkatesan and Tao, Runzhou},
  editor = {Achlioptas, Dimitris and Végh, László A.},
  date = {2019-09},
  series = {{{LIPIcs}}},
  volume = {145},
  pages = {5:1--5:12},
  publisher = {Schloss Dagstuhl — Leibniz-Zentrum für Informatik},
  doi = {10.4230/LIPIcs.APPROX-RANDOM.2019.5},
  eventdate = {2019-09-20/2019-09-22},
  eventtitle = {{{APPROX}} 2019},
  venue = {Cambridge, MA, USA}
}

@inproceedings{GVV17,
  title = {Streaming {{Complexity}} of {{Approximating Max 2CSP}} and {{Max Acyclic Subgraph}}},
  booktitle = {Approximation, Randomization, and Combinatorial Optimization. {{Algorithms}} and Techniques},
  author = {Guruswami, Venkatesan and Velingker, Ameya and Velusamy, Santhoshini},
  editor = {Jansen, Klaus and Rolim, José D. P. and Williamson, David and Vempala, Santosh S.},
  date = {2017-08},
  series = {{{LIPIcs}}},
  volume = {81},
  pages = {8:1--8:19},
  publisher = {Schloss Dagstuhl — Leibniz-Zentrum für Informatik},
  doi = {10.4230/LIPIcs.APPROX-RANDOM.2017.8},
  eventdate = {2017-08-16/2017-08-18},
  eventtitle = {{{APPROX}} 2017},
  venue = {Berkeley, CA, USA}
}

@inproceedings{Kar72,
  title = {Reducibility among {{Combinatorial Problems}}},
  booktitle = {Complexity of {{Computer Computations}}},
  author = {Karp, Richard M.},
  editor = {Miller, R.E. and Thatcher, J.W. and Bohlinger, J.D.},
  date = {1972},
  series = {The {{IBM Research Symposia Series}} ({{IRSS}})},
  pages = {85--103},
  publisher = {Springer},
  doi = {10.1007/978-1-4684-2001-2_9},
  eventdate = {1972-03-20/1972-03-22},
  venue = {Yorktown Heights, NY, USA}
}

@inproceedings{Kho02,
  title = {On the Power of Unique 2-Prover 1-Round Games},
  booktitle = {Proceedings of the 34th {{Annual ACM Symposium}} on {{Theory}} of {{Computing}}},
  author = {Khot, Subhash},
  date = {2002-05},
  pages = {767--775},
  publisher = {Association for Computing Machinery},
  doi = {10.1145/509907.510017},
  eventdate = {2002-05-19/2002-05-21},
  eventtitle = {{{STOC}} 2002},
  venue = {Québec, Canada}
}

@inproceedings{KK15,
  title = {Sketching Cuts in Graphs and Hypergraphs},
  booktitle = {Proceedings of the 6th {{Annual Conference}} on {{Innovations}} in {{Theoretical Computer Science}}},
  author = {Kogan, Dmitry and Krauthgamer, Robert},
  date = {2015},
  pages = {367--376},
  publisher = {Association for Computing Machinery},
  doi = {10.1145/2688073.2688093},
  eventdate = {2015-01-11/2015-01-13},
  eventtitle = {{{ITCS}} 2015},
  venue = {Rehovot, Israel},
  keywords = {hypergraphs,max-cut,sketching,sparsifiers,streaming}
}

@inproceedings{KK19,
  title = {An Optimal Space Lower Bound for Approximating {{MAX-CUT}}},
  booktitle = {Proceedings of the 51st {{Annual ACM SIGACT Symposium}} on {{Theory}} of {{Computing}}},
  author = {Kapralov, Michael and Krachun, Dmitry},
  date = {2019-06},
  pages = {277--288},
  publisher = {Association for Computing Machinery},
  doi = {10.1145/3313276.3316364},
  eventdate = {2019-06-23/2019-06-26},
  eventtitle = {{{STOC}} 2019},
  venue = {Phoenix, AZ, USA}
}

@inproceedings{KKS15,
  title = {Streaming Lower Bounds for Approximating {{MAX-CUT}}},
  booktitle = {Proceedings of the 26th {{Annual ACM-SIAM Symposium}} on {{Discrete Algorithms}}},
  author = {Kapralov, Michael and Khanna, Sanjeev and Sudan, Madhu},
  date = {2015-01},
  pages = {1263--1282},
  publisher = {{Society for Industrial and Applied Mathematics}},
  doi = {10.1137/1.9781611973730.84},
  eventdate = {2015-01-04/2015-01-06},
  eventtitle = {{{SODA}} 2015},
  venue = {San Diego, CA, USA}
}

@inproceedings{KKSV17,
  booktitle = {Proceedings of the 28th {{Annual ACM-SIAM Symposium}} on {{Discrete Algorithms}}},
  author = {Kapralov, Michael and Khanna, Sanjeev and Sudan, Madhu and Velingker, Ameya},
  date = {2017-01},
  pages = {1703--1722},
  publisher = {{Society for Industrial and Applied Mathematics}},
  doi = {10.5555/3039686.3039798},
  eventdate = {2017-01-16/2017-01-19},
  eventtitle = {{{SODA}} 2017},
  venue = {Barcelona, Spain},
  title = {{\((1 + \omega(1))\)}-Approximation to {{MAX-CUT}} Requires Linear Space}
}

@thesis{New00,
  type = {mathesis},
  title = {Approximating the Maximum Acyclic Subgraph},
  author = {Newman, Alantha},
  date = {2000},
  institution = {Massachusetts Institute of Technology}
}

@article{Opa79,
  title = {Total {{Ordering Problem}}},
  author = {Opatrny, Jaroslav},
  date = {1979-02},
  journaltitle = {SIAM Journal on Computing},
  shortjournal = {SIAM J. Comput.},
  volume = {8},
  number = {1},
  pages = {111--114},
  publisher = {{Society for Industrial and Applied Mathematics}},
  doi = {10.1137/0208008}
}

@thesis{Sin22,
  type = {bathesis},
  title = {On Streaming Approximation Algorithms for Constraint Satisfaction Problems},
  author = {Singer, Noah},
  date = {2022-03},
  institution = {Harvard University},
  location = {Cambridge, MA},
  url = {https://nrs.harvard.edu/URN-3:HUL.INSTREPOS:37371750},
  pagetotal = {140},
  keywords = {noah-singer-misc}
}

@inproceedings{SSSV23-dicut,
  title = {Improved Streaming Algorithms for {{Maximum Directed Cut}} via Smoothed Snapshots},
  booktitle = {63rd {{Annual Symposium}} on {{Foundations}} of {{Computer Science}}},
  author = {Saxena, Raghuvansh R. and Singer, Noah and Sudan, Madhu and Velusamy, Santhoshini},
  date = {2023},
  pages = {855--870},
  publisher = {IEEE Computing Society},
  doi = {10.1109/FOCS57990.2023.00055},
  eventdate = {2023-11-06/2023-11-09},
  eventtitle = {{{FOCS}} 2023},
  venue = {Santa Cruz, CA, USA},
  keywords = {noah-singer-conf}
}

@inproceedings{SSSV23-random-ordering,
  title = {Streaming Complexity of {{CSPs}} with Randomly Ordered Constraints},
  booktitle = {Proceedings of the 2023 {{Annual ACM-SIAM Symposium}} on {{Discrete Algorithms}}},
  author = {Saxena, Raghuvansh R. and Singer, Noah G. and Sudan, Madhu and Velusamy, Santhoshini},
  date = {2023-01},
  pages = {4083--4103},
  publisher = {{Society for Industrial and Applied Mathematics}},
  doi = {10.1137/1.9781611977554.ch156},
  eventdate = {2023-01-22/2023-01-25},
  eventtitle = {{{SODA}} 2023},
  venue = {Florence, Italy},
  keywords = {noah-singer-conf}
}

@inproceedings{SSV21-conf-version,
  title = {Streaming Approximation Resistance of Every Ordering {{CSP}}},
  booktitle = {Approximation, {{Randomization}}, and {{Combinatorial Optimization}}. {{Algorithms}} and {{Techniques}}},
  author = {Singer, Noah and Sudan, Madhu and Velusamy, Santhoshini},
  editor = {Wootters, Mary and Sanità, Laura},
  date = {2021-07},
  series = {{{LIPIcs}}},
  volume = {207},
  pages = {17:1-17:19},
  publisher = {Schloss Dagstuhl — Leibniz-Zentrum für Informatik},
  doi = {10.4230/LIPIcs.APPROX/RANDOM.2021.17},
  eventdate = {2021-08-16/2021-08-18},
  eventtitle = {{{APPROX}} 2021},
  venue = {virtual},
  keywords = {noah-singer-conf}
}

@online{SSV21-early-version,
  title = {Streaming Approximation Resistance of Every Ordering {{CSP}}},
  author = {Singer, Noah and Sudan, Madhu and Velusamy, Santhoshini},
  date = {2021-05-04},
  eprint = {2105.01782v1},
  eprinttype = {arXiv},
  eprintclass = {cs.DS},
  pubstate = {prepublished},
  note = {Original version of this paper; proved only {\(o(\sqrt{n})\)} space lower bounds}
}

@inproceedings{Sud22,
  title = {Streaming and {{Sketching Complexity}} of {{CSPs}}: {{A}} Survey ({{Invited Talk}})},
  booktitle = {49th {{International Colloquium}} on {{Automata}}, {{Languages}}, and {{Programming}}},
  author = {Sudan, Madhu},
  editor = {Bojańczyk, Mikołaj and Merelli, Emanuela and Woodruff, David P.},
  date = {2022},
  series = {{{LIPIcs}}},
  volume = {229},
  pages = {5:1--5:20},
  publisher = {Schloss Dagstuhl — Leibniz-Zentrum für Informatik},
  doi = {10.4230/LIPIcs.ICALP.2022.5},
  eventdate = {2022-07-04/2022-07-08},
  eventtitle = {{{ICALP}} 2022},
  venue = {Paris, France}
}

@thesis{Vel23,
  type = {phdthesis},
  title = {Approximability of Constraint Satisfaction Problems in the Streaming Setting},
  author = {Velusamy, Santhoshini},
  date = {2023-05},
  institution = {Harvard University},
  location = {Cambridge, MA},
  url = {https://nrs.harvard.edu/URN-3:HUL.INSTREPOS:37377430},
  pagetotal = {177},
  keywords = {thesis}
}

@inproceedings{VY11,
  title = {The Streaming Complexity of Cycle Counting, Sorting by Reversals, and Other Problems},
  booktitle = {Proceedings of the 22nd Annual {{ACM-SIAM}} Symposium on Discrete Algorithms},
  author = {Verbin, Elad and Yu, Wei},
  date = {2011},
  pages = {11--25},
  publisher = {{Society for Industrial and Applied Mathematics}},
  doi = {10.5555/2133036.2133038},
  eventdate = {2011-01-23/2011-01-25},
  eventtitle = {{{SODA}} 2011},
  venue = {San Francisco, CA, USA}
}

\appendix
\section{Good approximations in $\tilde{O}(n)$ space via sparsification}\label{app:sparse-alg}

In this appendix, we prove \ref{thm:sparse-alg}, which gives a simple $(1-\epsilon)$-approximation algorithm for all $\mocsp$ problems given $\tilde{O}(n)$ space. To prove \ref{thm:sparse-alg}, we first develop a simple sampling lemma that states that sparsifying down to $\Omega(n \log n/\epsilon^2)$ constraints preserves the value of a $\mocsp$ instance up to an additive $\pm \epsilon$:

\begin{lemma}[$\tilde{O}(n)$-space sampling lemma]\label{lem:sparse}
    Let $\cF = \bigcup_{k \in \N}\{\Pi : \sym_k \to \{0,1\}\}$ denote the (infinite) family of all ordering predicates. For all $\epsilon > 0$ and sufficiently large $n$, let $\Psi$ be an instance of $\mocsp(\cF)$ on $n$ variables. Let $\tilde{m} \geq 10 n \log n /\epsilon^2$. Consider the random instance $\tilde{\Psi}$ with $n$ variables and $\tilde{m}$ constraints which is sampled by sampling each constraint independently and uniformly from the constraints of $\Psi$. Then w.p. $99\%$ over the choice of $\tilde{\Psi}$, $\ocspval_\Psi - \epsilon \leq \ocspval_{\tilde \Psi} \leq \ocspval_\Psi + \epsilon$. 
\end{lemma}

\begin{proof}
    First, fix an assignment $\vecsigma \in \sym_n$ to $\tilde{\Psi}$. Let $\cE_\vecsigma$ denote the event that $|\ocspval_{\tilde \Psi}(\vecsigma) - \ocspval_\Psi(\vecsigma)|\geq \epsilon$. We claim that \[ \Pr[\cE_\vecsigma] \leq 2\exp(-100 n\log n). \] Indeed, for $j \in [\tilde{m}]$, let $C_j$ denote the $j$-th constraint sampled when sampling $\tilde{\Psi}$, and let $X_j$ denote the indicator for the event that $C_j$ is satisfied by $\vecsigma$. So $\tilde{m} \cdot \ocspval_{\tilde \Psi}(\vecsigma) = \sum_{i=1}^{\tilde m} X_i$. Multiplying both sides of the desired inequality by $\tilde{m}$, $\cE_\vecsigma$ is the event that $|\sum_{i=1}^{\tilde m} X_i - \tilde{m} \cdot \ocspval_\Psi(\vecsigma)| \geq \epsilon \tilde{m}$. Since $C_j$ is a uniform constraint from $\Psi$, $\Exp[X_j] = \ocspval_\Psi$. Since the $C_j$'s are sampled independently, by the Chernoff bound (\ref{lemma:chernoff}) with $\eta = \epsilon$ and $p = \ocspval_\Psi$ we have
    \begin{multline*}
        \Pr\left[\left\lvert\sum_{i=1}^{\tilde m} X_i - \tilde{m} \cdot \ocspval_\Psi\right\rvert \geq \epsilon \tilde{m}\right] \leq 2\exp(-2\epsilon^2 \tilde{m} / \ocspval_\Psi) \\
        \leq 2\exp(-2\epsilon^2 \tilde{m}).
    \end{multline*}
    We assumed $\tilde{m} \geq 10 n\log n/\epsilon^2$. Therefore, the RHS is at most $2\exp(-100 n\log n)$, as desired.

    Now, let $\cE = \bigcup_{\vecsigma \in \sym_n} \cE_\vecsigma$ denote the event that $\cE_\vecsigma$ occurs for any $\vecsigma \in \sym_n$. To bound $\Pr[\cE]$, we use a union bound:  \[ \Pr[\cE] < 2 |\sym_n| \exp(-100n \log n). \] But by Stirling's approximation (\ref{lem:stirling}), $|\sym_n| \leq 3 \sqrt{n} (n/e)^n$, and $\exp(-100 n\log n) \geq n^{-100n}$ (since $\log n > \ln n$), so the error probability is $(3 \sqrt{n} (n/e)^n) (2 n^{-100n}) < 6 n^{-99n+\frac12}$ which is less than $1/100$ for sufficiently large $n$.

    Finally, we show that conditioned on $\cE$, the desired inequality $\ocspval_\Psi - \epsilon \leq \ocspval_{\tilde \Psi} \leq \ocspval_\Psi + \epsilon$ holds. Indeed, we have
    \begin{multline*}
        \ocspval_{\tilde \Psi} = \max_{\vecsigma \in \sym_n} \ocspval_{\tilde \Psi}(\vecsigma) \\ \leq  \max_{\vecsigma \in \sym_n} (\ocspval_\Psi(\vecsigma) + \epsilon) = \ocspval_\Psi + \epsilon
    \end{multline*} and similarly for $\ocspval_{\tilde \Psi} \geq \ocspval - \epsilon$.
\end{proof}

\begin{remark}
    For OCSPs, the $\log n$ factor in $\tilde{m}$ is required because the solution space has size $\Omega((n/e)^n)$ (by Stirling's approximation) and we take a union bound over all solutions. In contrast, for standard CSPs over an alphabet size of $q$, the solution space has size only $q^n$, and in the analogous analysis, there would only be a $\log q$ factor in $\tilde{m}$.
\end{remark}

Given \ref{lem:sparse}, we can now prove \ref{thm:sparse-alg}:

\begin{proof}[Proof of \ref{thm:sparse-alg}]
    Let $\epsilon' = \epsilon \rho(\Pi)/2$, and let $\tilde{m} = \lceil 10 n \log n/(\epsilon')^2 \rceil$. Consider the following streaming algorithm to sample an instance $\tilde{\Psi}$ on $n$ variables and $\tilde{m}$ constraints: Initialize a buffer $(C_1,\ldots,C_{\tilde{m}})$ of constraints, and then when the $i$-th constraint $C$ of $\Psi$ arrives, set $C_j \gets C$ with probability $1/i$ (otherwise $C_j$ remains the same) independently for each $j \in [\tilde m]$. After the stream, let $\tilde{\Psi}$ denote the instance formed by $(C_1,\ldots,C_{\tilde m})$, and output $\ocspval_{\tilde \Psi} - \epsilon'$.
    
    After all constraints of $\Psi$ arrive, each constraint $C_j$ is an independent, uniformly chosen constraint from $\Psi$, and so we can apply \ref{lem:sparse} to $\tilde{\Psi}$ to deduce that w.h.p. $\ocspval_\Psi - \epsilon' \leq \ocspval_{\tilde \Psi} \leq \ocspval_\Psi + \epsilon'$. Conditioning on this event, we deduce that \[ \ocspval_\Psi - 2\epsilon' \leq \ocspval_{\tilde \Psi}  \leq \ocspval_\Psi. \] The LHS is \[ \ocspval_\Psi - 2\epsilon' = \ocspval_\Psi - \epsilon \rho(\Pi) \geq (1-\epsilon) \ocspval_\Psi \] since $\ocspval_\Psi \geq \rho(\Pi)$, yielding the desired conclusion.
\end{proof}

\end{document}